\pgfplotsset{compat=1.18}
\newcommand{\calI}{\mathcal{I}}
\newcommand{\calL}{\mathcal{L}}
\newcommand{\calN}{\mathcal{N}}
\newcommand{\calNdisc}[1]{\mathcal{N}_{#1 \bbZ}}
\newcommand{\calX}{\mathcal{X}}
\renewcommand{\o}{\mathsf{o}}
\renewcommand{\O}{\mathsf{O}}
\newcommand{\snr}{\mathsf{snr}}
\newcommand{\probP}{\mathsf{P}}
\newcommand{\var}{\mathrm{var}}
\newcommand{\bbE}{\mathbb{E}}
\newcommand{\bbR}{\mathbb{R}}
\newcommand{\bbZ}{\mathbb{Z}}
\newcommand{\intR}{\int_{\bbR}}
\renewcommand{\d}{\,\mathrm{d}}
\newcommand{\dmin}{\mathsf{d}_{\min}}
\newcommand{\dist}{\mathsf{d}}
\newtheorem{theorem}{Theorem}
\newtheorem{corollary}{Corollary}
\newtheorem{lemma}{Lemma}
\newtheorem{proposition}{Proposition}
\DeclareMathOperator*{\argmin}{arg\,min}
\title{High Signal-to-Noise Ratio Asymptotics of Entropy-Constrained Gaussian 
	Channel Capacity} 
\author{%
	Adway Girish$^*$, Shlomo Shamai (Shitz)$^\dagger$, Emre Telatar$^*$ \\
	{\small    $^*$School of Computer and Communication Sciences, EPFL}\\
	{\small    $^\dagger$Faculty of Electrical and Computer Engineering, Technion}\\
	{\small   \texttt{adway.girish@epfl.ch},\,\texttt{sshlomo@ee.technion.ac.il},\,\texttt{emre.telatar@epfl.ch}}
}
\begin{document}

	\maketitle
	
	\begin{abstract}
		We study the input-entropy-constrained Gaussian channel capacity problem in 
		the asymptotic
		high signal-to-noise ratio (SNR) regime. 
		We show that the capacity-achieving distribution as SNR goes to 
		infinity is given by a discrete Gaussian distribution supported on a 
		scaled integer lattice.  Further, we show that the gap between the input entropy and the capacity decreases to zero exponentially in SNR, and characterize this exponent.
	\end{abstract}

	\section{Introduction}

	Consider an additive Gaussian noise channel with an input-output relationship given by $Y = X + Z$, where $Z \sim \calN(0,1/\snr)$ is a Gaussian random variable, independent of $X$, and $\snr > 0$ is the signal-to-noise ratio (SNR) of the channel.  The capacity of this channel with an average-power constraint on the input is given by
	\begin{equation*}
		C(\snr) = \sup_{\bbE[X^2] \leq 1} I(X; Y) = \tfrac12 \log ( 1 + \snr),
	\end{equation*}
	with the supremum taken over all distributions of $X$ over $\bbR$, achieved uniquely by taking $X \sim \calN(0,1)$.

In modern cloud-based communication systems, the channel input is not generated directly at the transmitter but is instead produced or selected by a remote agent and conveyed to the transmitter over a rate-limited fronthaul link \cite{fronthaul}. Motivated by this, in earlier work~\cite{girish2025lowsnr}, we introduced an entropy constraint in addition to the power constraint, to model the rate-limited link.
In such a setting, the quantity of interest is the entropy-constrained capacity, defined as 
	\begin{equation}
		C_H(h, \snr) = \sup_{\substack{\bbE[X^2] \leq 1,\\H(X) \leq h}} I(X; Y)
		= \max_{\substack{\bbE[X^2] = 1,\, \bbE[X]=0, \\H(X) = h}} I(X; Y), \label{eqn: C_H}
	\end{equation}
	for $h > 0$. The second equality, i.e., existence of a capacity-achieving distribution satisfying the constraints on entropy, mean and variance with equality, can be justified easily and was shown earlier \cite[Proposition~1]{girish2025lowsnr}.
	
	For any finite $h > 0$, we necessarily have that $X$ is discrete, and hence, $C_H(h, \snr) < C(\snr)$.  As the mutual information is upper bounded by the entropy, we also trivially have $C_H(h, \snr) < h$, with the inequality strict for all finite $\snr$.  Though these upper bounds are strict, the gaps $C(\snr) - C_H(h, \snr)$ and $h - C_H(h, \snr)$ decay to 0 as $\snr$ goes to 0 and $\infty$ respectively.  We showed earlier \cite{girish2025lowsnr} that $C(\snr) - C_H(h, \snr) = \O(\snr^4)$ as $\snr \to 0$ for any $h > 0$, implying that the entropy constraint imposes a negligible penalty on the capacity at low SNR, as $C(\snr) \approx \frac{\snr}{2} \gg \snr^4$.  Equivalently, at low SNR, it is possible to closely approximate the Gaussian capacity by using an arbitrarily low entropy.  The optimal distribution turns out to be the one that matches as many initial 
	moments as possible with $W \sim \calN(0,1)$ while satisfying the entropy constraint.
	
	In this paper, we consider the high SNR asymptotics of $C_H$.  Note that $\lim_{\snr\to\infty} C_H(h, \snr) = h$, and this limit of $h$ is attained by any distribution with entropy equal to $h$.
	We are interested in finding the distribution which ``achieves capacity as $\snr \to \infty$'', defined to be the one for which the gap $h - I(X;Y) = H(X \mid Y)$ decays the fastest in $\snr$.  It is a priori not obvious that any distribution achieves the fastest decay or if it is unique.  We show that this is indeed the case, and that this asymptotic optimal distribution is the one that has the largest minimum distance between its discrete atoms, which is in turn given by a discrete Gaussian distribution \cite{canonne2020discrete,stephens2017gaussian} supported on a scaled integer lattice.
	We also show that the gap $h - C_H(h, \snr)$ goes to zero at least exponentially fast in $\snr$, with an exponent that is decreasing in $h$.  We observe that the exponent behaves differently at ``small'' and ``large'' entropies, and give an explicit approximation for each case.  
	
	\subsection{Prior work}
	Several variations of Shannon's classical Gaussian channel capacity \cite{shannon} have been proposed to study the effect of practical limitations on the capacity \cite{afaycal2001fading,lapidoth2009poisson,rf_2025,merhav2025volumebased}.  A common theme is the appearance of discrete distributions as optimal distributions, even when there is no explicit requirement of discreteness, such as with a peak-amplitude constraint \cite{smith1971information,sharma2010transition,dytso2019capacity,thangaraj2017capacity}.
	Discreteness may also be imposed explicitly, such as the cardinality constraint considered by
	Wu and Verd\'u \cite{wu2010impact}. 
	
	The high-SNR regime that we consider was previously studied by Alvarado et al.~\cite{highSNR_alvarado}, who showed that the conditional entropy decays exponentially in SNR and characterized the exponent for discrete, finite constellations in terms of the minimum distance between consecutive atoms.  We extend this to arbitrary discrete distributions, and use it to conclude that the optimal distribution as $\snr \to \infty$ is the discrete Gaussian distribution~\cite{canonne2020discrete,stephens2017gaussian}.
	Being the discrete analogue of the Gaussian distribution, it arises naturally in computer science applications~\cite{sampling}, lattice-based cryptography \cite{Dwarakanath2014,prest:tel-01245066}, probabilistic shaping of codes~\cite{shaping_discrete}, differential privacy~\cite{canonne2020discrete}, to name a few.

	\subsection{Notation}
	Uppercase letters (e.g.\ $X$, $Y$, \dots) denote random variables. We use $\bbE[X]$ to denote the expectation of $X$.
	We refer to a random variable and its probability distribution interchangeably.
	$\calN(\mu, \sigma^2)$ denotes the Gaussian distribution with mean $\mu$ and variance $\sigma^2$, and $\calNdisc{\beta}(\lambda)$ denotes the discrete Gaussian distribution (see Section~\ref{sec: main} for its definition).
	The mutual information between $X$ and $Y$ is denoted by $I(X;Y)$. The entropy of $X$ is denoted by $H(X)$, and the conditional entropy is given by $H(X \mid Y) = H(X) - I(X;Y)$. 
	$\log$ and $\exp$ are inverses of each other and are taken to base $e$, but we write all numerical values in bits, e.g., $h = \log e = 1$ as $\log_2 e \approx 1.44$ bits, simply to make them more interpretable.
	We write $f(x) = \O(g(x))$ if $\lim_x f(x)/g(x) < \infty$ (the limiting value of $x$ will be clear from context) and $f(x) = \o(g(x))$ if the limit is $0$.

	\section{Main results} \label{sec: main}
	
	We say that a distribution $X^*$ is ``\emph{capacity achieving} or \emph{optimal as $\snr \to \infty$}'' if for any distribution $X$, there exists some $\snr_0 > 0$ such that $I(X^*; X^* + Z) \geq I(X; X + Z)$ for all $\snr > \snr_0$, where $Z \sim \calN(0,1/\snr)$ is independent of $X, X^*$.  This $X^*$ can be used to obtain a lower bound on $C_H(h, \snr)$ that is asymptotically tight as $\snr \to \infty$. 
	For any discrete random variable $X$ with atoms $x_i$, let $\dmin(X) \coloneqq  \inf_{i\neq j} |x_{i} - x_j|$, where $x_i$'s are the atoms of $X$, be the \emph{minimum distance} of $X$.  It is worth noting that this infimum is necessarily attained if $X$ has a finite support, but may not be attained if $X$ has an infinite support, in which case we may also have $\dmin(X) = 0$.
	
	We define the \emph{discrete Gaussian distribution} with parameters $\beta,\lambda$ as follows: we write $X \sim \calNdisc{\beta}(\lambda)$ to mean the discrete distribution supported on the lattice $\beta \bbZ = \{\beta n : n \in \bbZ\}$ with probability mass
	\begin{equation*}
		\probP_{X}(x_i) = \exp(-\lambda i^2) / \sum_{j \in \bbZ} \exp(-\lambda j^2) = \exp(-L(\lambda)-\lambda i^2) 
	\end{equation*}
	at $x_i = \beta i$, where $L(\lambda) = \log\big(\sum_{i \in \bbZ} \exp(-\lambda i^2)\big)$.  Note that the atoms of $X \sim \calNdisc{\beta}(\lambda)$ are equally spaced, with a distance $\beta$ between any consecutive atoms. In the following theorem, we show that the capacity-achieving distribution as $\snr \to \infty$ is given by the discrete Gaussian with $\beta,\lambda$ chosen to satisfy the power and entropy constraints with equality.
	
	\begin{theorem}[High-SNR asymptotics of $C_H$] 
		For finite $h > 0$, the distribution achieving capacity as $\snr \to \infty$ is $\calNdisc{\dist_h}(\lambda_h)$, where $\lambda_h$ is the 
		value of $\lambda$ that solves $h = L(\lambda) - \lambda L'(\lambda)$ and $\dist_h = 1/\sqrt{-L'(\lambda_h)}$.
		Further, the capacity is given by the asymptotic expression 
		\begin{equation}
			C_H(h, \snr) = h - \O\left(\exp\left( - \snr\tfrac{\dist_h^2}{8}  \right)\right). \label{eqn: gap_char}
		\end{equation} 
		\label{thm: char}
	\end{theorem}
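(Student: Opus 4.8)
The plan is to reduce the comparison between competing input distributions to a single geometric quantity, the minimum distance $\dmin(X)$, and then to solve a clean moment-optimization problem. Writing the gap as $h - I(X; X+Z) = H(X \mid X+Z)$, a distribution is optimal as $\snr \to \infty$ exactly when its conditional entropy decays fastest. The heuristic, made precise by Alvarado et al.~for finite constellations, is that at high SNR the only confusable pairs are the closest atoms, and two atoms at distance $d$ are confused with probability $\approx Q(d\sqrt{\snr}/2) \approx \exp(-\snr d^2/8)$ since $Z$ has standard deviation $1/\sqrt{\snr}$. The first step is to upgrade this to an arbitrary discrete $X$ with $\dmin(X) > 0$, establishing
\begin{equation*}
	\lim_{\snr \to \infty} \frac{-1}{\snr}\log H(X \mid X+Z) = \frac{\dmin(X)^2}{8}.
\end{equation*}
The lower bound follows by restricting attention to a pair of atoms realizing (or approaching) the minimum distance and bounding $H(X \mid X+Z)$ below by the binary confusion entropy $\binent$ between them; the upper bound requires summing pairwise confusion probabilities over all atoms. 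The only genuinely new difficulty relative to the finite-support case is controlling the infinitely many far-apart atoms, which for the discrete Gaussian are suppressed by the weights $\exp(-\lambda i^2)$ dominating the polynomially growing number of atoms at each scale.

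Granting this exponent, comparing two feasible inputs reduces to comparing their minimum distances, so the optimal $X^*$ is the maximizer of $\dmin(X)$ subject to $\bbE[X] = 0$, $\bbE[X^2] = 1$, and $H(X) = h$. I would solve this by scaling. If $X$ is feasible with $d = \dmin(X)$, then $X/d$ has minimum distance $1$, the same entropy $h$ (entropy is invariant under rescaling of the atoms), and second moment $1/d^2$. Maximizing $d$ is therefore equivalent to minimizing the second moment over all zero-mean $X'$ with $\dmin(X') \ge 1$ and $H(X') = h$.

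This minimization splits in two. First, a compression argument shows the minimizer lives on a unit-spaced lattice: shrinking every gap between consecutive atoms down to exactly $1$ strictly decreases all pairwise distances $|a_i - a_j|$, hence strictly decreases the variance $\frac12 \sum_{i,j} p_i p_j (a_i - a_j)^2$, while leaving the probabilities and therefore the entropy unchanged; re-centering then restores mean zero at no cost in variance. Second, on the integer lattice the problem of minimizing $\sum_i i^2 p_i$ subject to $\sum_i p_i = 1$ and $H(\{p_i\}) = h$ is a Gibbs variational problem whose stationarity condition forces $p_i \propto \exp(-\lambda i^2)$, namely the discrete Gaussian $\calNdisc{1}(\lambda)$, with strict convexity giving uniqueness. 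Since $L'(\lambda) = -\sum_i i^2 e^{-\lambda i^2}/\sum_j e^{-\lambda j^2}$, the entropy of $\calNdisc{1}(\lambda)$ equals $L(\lambda) - \lambda L'(\lambda)$ and its second moment equals $-L'(\lambda)$; matching the entropy to $h$ fixes $\lambda = \lambda_h$ and the minimal second moment as $-L'(\lambda_h)$. Unwinding the scaling gives the maximal minimum distance $\dist_h = 1/\sqrt{-L'(\lambda_h)}$, attained uniquely by $\calNdisc{\dist_h}(\lambda_h)$, and then $h - C_H(h,\snr) = H(X^* \mid X^*+Z) = \O(\exp(-\snr \dist_h^2/8))$ from the upper bound of the first step.

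The main obstacle I anticipate is the combination of the compression step with the infinite-support tail control. Showing that every non-lattice configuration can be strictly contracted toward the origin without violating the minimum-distance or mean constraints — uniformly over distributions with countably many atoms — is delicate, as is checking that the variational uniqueness argument survives on a countable support. One should also verify that the pointwise optimality of $X^*$ (eventually beating every fixed competitor) upgrades to the stated asymptotic for $C_H(h,\snr)$, i.e.\ that no $\snr$-dependent family decays strictly faster; this holds because every feasible input has $\dmin \le \dist_h$, hence a decay exponent at most $\dist_h^2/8$, so it cannot decay faster than $X^*$.
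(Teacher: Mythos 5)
Your overall route coincides with the paper's: first establish that $\lim_{\snr\to\infty}-\tfrac1\snr\log H(X\mid X+Z)=\dmin(X)^2/8$ (lower bound from a closest pair of atoms, upper bound from summing pairwise confusions --- the paper organizes the latter with Voronoi cells and $Q$-function estimates, and treats $\dmin=0$ separately via $H(X\mid Y)\le H(X)$, but the content is the same), then maximize $\dmin$ under the constraints by rescaling to a second-moment minimization, compressing onto a unit-spaced grid, and solving a Gibbs-type entropy problem. Your closing observation, that every feasible $X$ other than the maximizer has $\dmin(X)<\dist_h$ and hence a strictly worse exponent, is also how the paper concludes.

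The genuine gap is the jump from ``unit-spaced lattice'' to ``the integer lattice.'' Your compression step only forces the atoms onto a coset $a+\bbZ$ for some offset $a\in[0,1)$, and re-centering to mean zero does not make $a=0$. On the coset $a+\bbZ$ the stationarity condition gives $p_i\propto\exp(-\lambda(i+a)^2)$, whose entropy and second moment are $L_a(\lambda)-\lambda L_a'(\lambda)$ and $-L_a'(\lambda)$ with $L_a(\lambda)=\log\sum_{j\in\bbZ}\exp(-\lambda(j+a)^2)$, so the achievable entropy-versus-second-moment tradeoff genuinely depends on $a$. In particular $a=1/2$ cannot be dismissed: the half-integer lattice supports symmetric, mean-zero, feasible distributions, and a priori these could do better. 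The paper closes this by (i) reducing to $a\in\{0,1/2\}$ via the mean-zero observation, and (ii) proving $L_0\ge L_{1/2}$ (Poisson summation) together with a tangent-line comparison lemma for convex decreasing functions (its Lemma~\ref{lem: tangent}) to conclude that $a=0$ gives the smaller variance at fixed entropy. Without some version of this comparison, your argument shows only that the optimizer is a discrete Gaussian on \emph{some} unit coset; the identification of the maximizer as $\calNdisc{\dist_h}(\lambda_h)$, its uniqueness, and hence the value of the exponent in \eqref{eqn: gap_char} are not yet justified.
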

	\begin{proof}
		See Section~\ref{sec: proofs}.
	\end{proof}
	
	The quantity $\dist_h$ in the above theorem is the largest minimum distance among all distributions satisfying the power and entropy constraints with equality, i.e., $\dist_h = \sup \{\dmin(X) : \bbE[X^2] = 1, H(X) = h\}$, which (as we show in Section~\ref{sec: dmin}), is achieved by the discrete Gaussian.
	We conjecture that this ``capacity-achieving as $\snr \to \infty$'' property can be strengthened to also say  that the sequence of capacity-achieving distributions at finite $\snr$'s converges weakly (i.e., in distribution) to $\calNdisc{\dist_h}(\lambda_h)$, as Wu and Verd\'u~\cite{wu2010impact} show for a cardinality constraint:  the optimal distribution as $\snr \to \infty$ is shown to be an equally spaced uniform distribution and this is the weak limit of the sequence of true capacity-achieving distributions at finite SNRs. 
	
	The quantity $L(\lambda)$ admits tight approximate characterizations in two regimes:  $\lambda \gg \pi$ and $\lambda \ll \pi$ \cite{canonne2020discrete}.
	In Section~\ref{sec: ent_asymp}, we use these approximate expressions for $\dist_h$ at $h \ll L(\pi) - \pi L'(\pi)$ and $h \gg L(\pi) - \pi L'(\pi)$ respectively, together with Theorem~\ref{thm: char}, to obtain the following explicit approximation for $C_H(h, \snr)$ for small and large entropy (still in the high SNR asymptotic regime).  The threshold $L(\pi) - \pi L'(\pi)$ is approximately 0.48 bits.

	\begin{corollary}[Low- and high-entropy approximation of $C_H$ at asymptotic high SNR]
		As $\snr \to \infty$, we approximate the gap $h - C_H(h, \snr)$ as
		\begin{equation}
			\begin{cases}
				\O\left(\exp\big(-\snr \frac{1}{8h}\log\frac{2}h\big)\right) & h \ll  0.48 \text{ bits},\\
				\O\left(\exp\big(-\snr\frac{\pi e}{4}e^{-2h}\big)\right) & h \gg 0.48 \text{ bits}.
			\end{cases} \label{eqn: small_large_ent}
		\end{equation} 
		Further, for $h \ll 0.48 \text{ bits}$, as $\snr \to \infty$, the capacity-achieving distribution is approximately the symmetric three-point distribution with a large mass at the origin. Meanwhile, as $h \to \infty$, we have $\calNdisc{\dist_h}(\lambda_h)$ converges in distribution to $\calN(0,1)$.
		\label{cor: ent_asymp}
	\end{corollary}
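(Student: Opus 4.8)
The plan is to treat Corollary~\ref{cor: ent_asymp} as a direct asymptotic evaluation of the exponent already supplied by Theorem~\ref{thm: char}. Since that theorem gives $h - C_H(h,\snr) = \O(\exp(-\snr\,\dist_h^2/8))$, the whole task reduces to computing $\dist_h^2 = 1/(-L'(\lambda_h))$ as a function of $h$, where $\lambda_h$ solves $h = L(\lambda) - \lambda L'(\lambda)$ and $L(\lambda) = \log\theta(\lambda)$ with $\theta(\lambda) = \sum_{n\in\bbZ}e^{-\lambda n^2}$. First I would record that $\lambda\mapsto L(\lambda) - \lambda L'(\lambda)$ is strictly decreasing (its derivative is $-\lambda L''(\lambda)$ and $L''(\lambda) = \var(n^2) > 0$ by convexity of the log-partition function), so that the threshold $h = L(\pi) - \pi L'(\pi) \approx 0.48$ bits is exactly the image of $\lambda = \pi$. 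Consequently $h \ll 0.48$ corresponds to $\lambda_h \gg \pi$ and $h \gg 0.48$ to $\lambda_h \ll \pi$, matching the two regimes in which $L$ admits clean approximations.

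For the low-entropy case I would use that for large $\lambda$ only the $n = 0, \pm 1$ terms of $\theta$ survive, so $L(\lambda) \approx 2e^{-\lambda}$ and $L'(\lambda) \approx -2e^{-\lambda}$. Substituting into the constraint gives $h \approx 2e^{-\lambda}(1+\lambda) \approx 2\lambda e^{-\lambda}$, and into the distance gives $\dist_h^2 \approx \tfrac12 e^{\lambda}$. Eliminating $\lambda$ through $e^{\lambda} \approx 2\lambda/h$ yields $\dist_h^2 \approx \lambda/h$; inverting $h \approx 2\lambda e^{-\lambda}$ to leading order gives $\lambda \approx \log(2/h)$, hence $\dist_h^2 \approx \tfrac1h\log\tfrac2h$ and the claimed exponent $\tfrac{1}{8h}\log\tfrac2h$.

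For the high-entropy case I would invoke the Jacobi/Poisson transformation $\theta(\lambda) = \sqrt{\pi/\lambda}\sum_{n\in\bbZ}e^{-\pi^2 n^2/\lambda}$, so that for small $\lambda$ only the $n=0$ term remains and $\theta(\lambda) \approx \sqrt{\pi/\lambda}$, giving $L(\lambda) \approx \tfrac12\log(\pi/\lambda)$ and $L'(\lambda) \approx -\tfrac{1}{2\lambda}$. The constraint then reads $h \approx \tfrac12\log(\pi/\lambda) + \tfrac12 = \tfrac12\log(\pi e/\lambda)$, which inverts cleanly to $\lambda_h \approx \pi e\,e^{-2h}$, and the distance is $\dist_h^2 \approx 2\lambda_h \approx 2\pi e\,e^{-2h}$, producing the exponent $\tfrac{\pi e}{4}e^{-2h}$.

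Finally, for the shape of the optimal distribution I would read off the atom masses $p_i = e^{-\lambda_h i^2}/\theta(\lambda_h)$ directly. In the low-entropy regime $\lambda_h \to \infty$, so $p_0 \to 1$, $p_{\pm 1}/p_0 = e^{-\lambda_h} \to 0$, and $p_{\pm k}/p_{\pm 1} = e^{-(k^2-1)\lambda_h} \to 0$ for $k \geq 2$, leaving a symmetric three-point distribution on $\{-\dist_h, 0, \dist_h\}$ with almost all mass at the origin. In the high-entropy regime $\lambda_h \to 0$, the spacing $\dist_h \approx \sqrt{2\lambda_h} \to 0$, and writing $x = \dist_h i$ the normalized mass becomes $p_i/\dist_h \approx e^{-x^2/2}/(\sqrt{\pi/\lambda_h}\,\sqrt{2\lambda_h}) = e^{-x^2/2}/\sqrt{2\pi}$, which converges to the standard Gaussian density and upgrades to weak convergence $\calNdisc{\dist_h}(\lambda_h) \Rightarrow \calN(0,1)$ by a Riemann-sum (or characteristic-function) argument. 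The main obstacle I anticipate is the transcendental inversion of $h \approx 2\lambda e^{-\lambda}$ in the low-entropy case: its exact solution carries a $\log\log(2/h)$ correction, and I must argue that retaining only the leading $\log(2/h)$ term is consistent with the \emph{approximation} being claimed, the neglected term being of lower order in the exponent rather than part of an exact asymptotic.
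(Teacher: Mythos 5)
Your proposal is correct and follows essentially the same route as the paper: its Lemma~\ref{lem: dist_h} derives the same two approximations for $\dist_h$ from the same small-/large-$\lambda$ bounds on $L$, and the corollary's proof then plugs them into Theorem~\ref{thm: char}, identifies the three-point distribution from the truncation of the theta sum to $i = 0,\pm1$, and concludes the weak convergence to $\calN(0,1)$ by citing a known result rather than your Riemann-sum sketch. One minor point in your favor: your elimination $\dist_h^2 \approx \lambda/h$ in the low-entropy case needs $\lambda_h \approx \log\tfrac2h$ only to leading order, which sidesteps the two-term Lambert-$W_{-1}$ expansion $-\lambda_h \approx \log\tfrac h2 - \log\log\tfrac2h$ that the paper invokes to get $e^{\lambda_h} \approx \tfrac2h\log\tfrac2h$, so the ``main obstacle'' you flag at the end is already resolved by your own algebra.
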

	\begin{proof}
		See Section~\ref{sec: ent_asymp}.
	\end{proof}
	
	The above results are illustrated in Fig.~\ref{fig: vs_snr}, where we compare the conditional entropy $H(X \mid Y)$ of the discrete Gaussian with the asymptotic expression from Theorem~\ref{thm: char}, i.e., $\exp(-\snr\frac{\dist_h^2}{8})$, and its approximation at small and large entropies.

	\begin{figure}[t]
		\centering
		\includegraphics[width=0.7\linewidth]{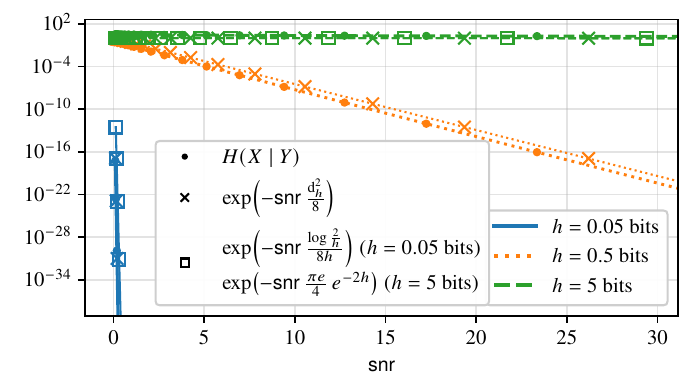}
		\caption{Figure showing $H(X \mid Y)$ for $X \sim \calNdisc{\dist_h}(\lambda_h)$ versus $\snr$, compared against $\exp\big(-\snr \frac{\dist_h^2}{8}\big)$ from Theorem~\ref{thm: char} at three different values of $h$ (in bits): 0.05, 0.5, 5. For the low and high entropy cases, we also compare against the explicit approximations from Corollary~\ref{cor: ent_asymp}. $y$-axis in log-scale.}
		\label{fig: vs_snr}
	\end{figure}
	
	\begin{figure}[t]
		\centering
		\begin{subfigure}[t]{0.5\textwidth}
			\centering
			\includegraphics[width=0.9\linewidth]{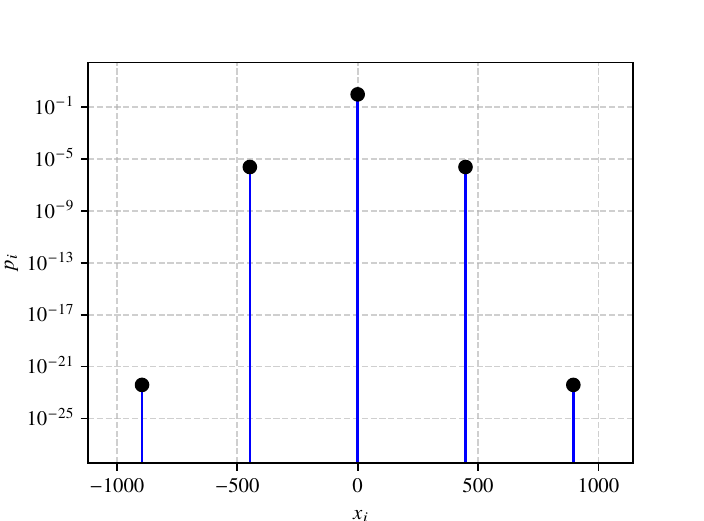}
			\caption{$\calNdisc{\dist_h}(\lambda_h)$ (in log-scale) for $h = 10^{-4}$ bits, $\dist_h = 447.48$}
		\end{subfigure}%
		~ 
		\begin{subfigure}[t]{0.5\textwidth}
			\centering
			\includegraphics[width=0.9\linewidth]{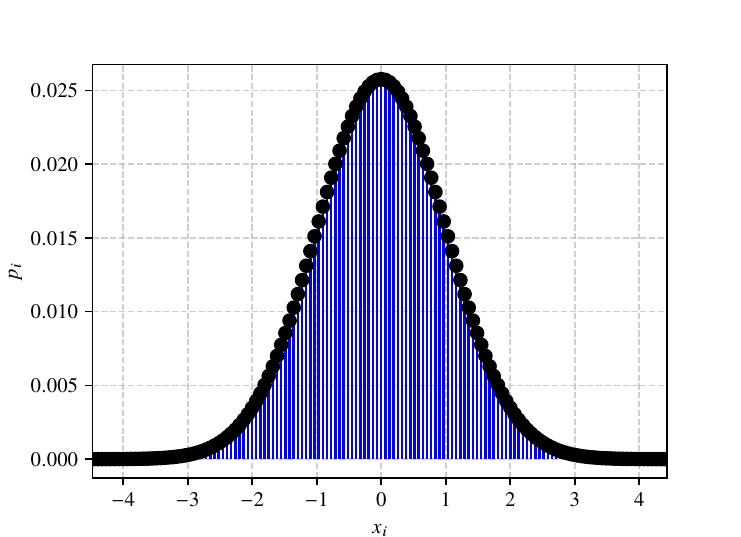}
			\caption{$\calNdisc{\dist_h}(\lambda_h)$ for $h = 6$ bits, $\dist_h = 0.064$}
		\end{subfigure}
		\caption{Comparison of the capacity-achieving distributions as $\snr \to \infty$ for two example distributions with (a) ``small'' entropy, $h = 10^{-4}$ bits: approximately a three-point symmetric distribution with a large mass at the origin (note that the $y$-axis is in log-scale) and (b) ``large'' entropy, $h = 6$: approximately the standard Gaussian distribution.}
		\label{fig: pmfs}
	\end{figure}
	
	\section{Technical details} \label{sec: proofs}
	
	To prove Theorem~\ref{thm: char}, we first show that as $\snr \to \infty$, the conditional entropy $H(X \mid Y) = H(X) - I(X;Y)$ is maximized by the distribution with the largest minimum distance between atoms. (Section~\ref{sec: cond_ent}, Lemma~\ref{lem: cond_ent}). Hence, for a given $h$, the distribution that achieves $C_H(h, \snr)$ as $\snr \to \infty$ is the one with the largest minimum distance among all distributions with variance 1 and entropy $h$. We show that the distribution that achieves this largest minimum distance is the discrete Gaussian (Section~\ref{sec: dmin}). Section~\ref{sec: ent_asymp} then considers the low- and high-entropy regimes and obtains an explicit approximation to $\dist_h$ in each case, which gives us Corollary~\ref{cor: ent_asymp}.

	\subsection{Reduction to minimum distance between atoms} \label{sec: cond_ent}
	
	Let $X$ be an arbitrary discrete random variable having atoms $x_i \in \bbR$
	with probability $p_i > 0$, where $i$ goes over some contiguous subset of $\bbZ$. 
	As $\snr$ goes to $\infty$, we have that $H(X \mid Y)$ goes to zero for any such fixed $X$. The following lemma shows that this decay is exponential in $\snr$ and quantifies the rate as $H(X \mid Y) = \exp\left(-\snr \frac{\dmin(X)^2}{8} + \o(\snr)\right)$.  This was shown earlier~\cite{highSNR_alvarado} for finite support distributions that have $\dmin(X) > 0$, we extend it to arbitrary support discrete distributions with $\dmin(X) \geq 0$.

	\begin{lemma}[Exponent of conditional entropy] For any discrete $X$ with a finite entropy, and $Y = X + Z$ with $Z \sim \calN(0,1/\snr)$ independent of $X$, we have
		\begin{equation*}
			\lim_{\snr \to \infty} - \frac{1}{\snr} \log H(X \mid Y) = \frac{\dmin(X)^2}{8}.
		\end{equation*} \label{lem: cond_ent}
	\end{lemma}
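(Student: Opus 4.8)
The plan is to establish two matching asymptotic bounds on $H(X\mid Y)$. The guiding intuition is that, at high $\snr$, the only appreciable residual uncertainty about $X$ given $Y$ comes from confusing a pair of atoms at the minimum distance $\dmin(X)$; since the noise has standard deviation $1/\sqrt{\snr}$, such a confusion has probability of order $\exp(-\snr\,\dmin(X)^2/8)$, which should set the exponential scale of $H(X\mid Y)$. Concretely, I would prove that $\limsup_{\snr\to\infty}-\frac{1}{\snr}\log H(X\mid Y)\le \dmin(X)^2/8$ (a lower bound on $H(X\mid Y)$) and $\liminf_{\snr\to\infty}-\frac{1}{\snr}\log H(X\mid Y)\ge\dmin(X)^2/8$ (an upper bound on $H(X\mid Y)$), and then combine them.

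For the lower bound on $H(X\mid Y)$, fix any pair of atoms $x_a,x_b$ with $|x_a-x_b|=d$ and look at $Y$ in a small window of half-width $\delta$ around their midpoint $m=(x_a+x_b)/2$. All other atoms lie at distance at least $3d/2$ from $m$, so for large $\snr$ their posterior mass is exponentially negligible and the posterior is essentially the binary distribution on $\{x_a,x_b\}$, giving $H(X\mid Y=y)\ge c_0>0$ uniformly on the window. Hence $H(X\mid Y)\ge c_0\,\probP(Y\in[m-\delta,m+\delta])\ge c_0\,p_a\,\probP(Z\in[d/2-\delta,d/2+\delta])$, and the Gaussian window probability decays like $\exp(-\snr(d/2-\delta)^2/2)$. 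Taking $-\frac{1}{\snr}\log$, letting $\delta\to0$, and then taking the infimum over all pairs yields $\limsup_{\snr}-\frac{1}{\snr}\log H(X\mid Y)\le \inf_{a\ne b}|x_a-x_b|^2/8=\dmin(X)^2/8$; this argument needs no structural assumption and in particular covers the case $\dmin(X)=0$.

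For the upper bound on $H(X\mid Y)$ (the crux, and where I expect the real difficulty for infinite support) I would start from the pointwise inequality
\begin{equation*}
-\log\probP(X\mid Y)\le-\log p_X+\tfrac{\snr}{2} Z^2,
\end{equation*}
obtained by writing $\probP(X\mid Y)=p_X f(Y-X)/f_Y(Y)$ and using $f_Y(Y)\le\sup_z f(z)=\sqrt{\snr/(2\pi)}$ to cancel the normalizing constant of the noise density $f$. I would then split $H(X\mid Y)=\bbE[-\log\probP(X\mid Y)]$ according to the event $E$ that the atom nearest to $Y$ differs from $X$, noting that $\{E=1\}\subseteq\{|Z|\ge\dmin(X)/2\}$ when $\dmin(X)>0$ (for $\dmin(X)=0$ the bound $\liminf\ge0$ is immediate from $H(X\mid Y)\le H(X)$). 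On $\{E=0\}$ I would instead bound $-\log\probP(X\mid Y)\le\sum_{j\ne i}\probP(x_j\mid Y)/\probP(X\mid Y)$ and take the conditional expectation given $X=x_i$: the conditioning Gaussian cancels the denominator, leaving $\frac{p_j}{p_i}\int_{\{E=0\}}f(z-(x_j-x_i))\,\d z\le\frac{p_j}{p_i}\exp(-\snr|x_j-x_i|^2/8)$, so the $\{E=0\}$ contribution is at most $\sum_j p_j\sum_{i\ne j}\exp(-\snr|x_i-x_j|^2/8)$. Because the atoms are $\dmin(X)$-separated, for fixed $j$ the inner sum is dominated by the nearest neighbor, hence bounded by $C\exp(-\snr\dmin(X)^2/8)$ uniformly, and $\sum_j p_j=1$ makes the whole thing $\exp(-\snr\dmin(X)^2/8\,(1+\o(1)))$ with no need for finiteness of entropy.

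The remaining piece, the contribution of $\{E=1\}$, is where I expect the main obstacle and where the finite-entropy hypothesis is essential. Using the pointwise bound it splits as $\bbE[-\log p_X\,\indic\{E=1\}]+\frac{\snr}{2}\bbE[Z^2\,\indic\{E=1\}]$. The second term is controlled by the Gaussian moment tail $\frac{\snr}{2}\bbE[Z^2\,\indic\{|Z|\ge\dmin(X)/2\}]$, again of exponent $\dmin(X)^2/8$. For the first term I would use the uniform bound $\probP(E=1\mid X=x_i)\le\probP(|Z|\ge\dmin(X)/2)$ to factor $\bbE[-\log p_X\,\indic\{E=1\}]\le\probP(|Z|\ge\dmin(X)/2)\cdot\sum_i p_i(-\log p_i)=\probP(|Z|\ge\dmin(X)/2)\,H(X)$; this is exactly the step that fails for infinite support without $H(X)<\infty$, since the per-atom error probability must be traded against the possibly unbounded self-information $-\log p_X$. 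Collecting the three contributions gives $H(X\mid Y)\le\exp(-\snr\dmin(X)^2/8\,(1+\o(1)))$, hence $\liminf_{\snr}-\frac{1}{\snr}\log H(X\mid Y)\ge\dmin(X)^2/8$, which together with the lower bound proves the claim.
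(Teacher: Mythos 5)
Your upper bound on $H(X\mid Y)$ (the half giving $\liminf_{\snr\to\infty}-\tfrac1\snr\log H(X\mid Y)\ge \dmin(X)^2/8$) is correct and takes a genuinely different route from the paper: the paper computes $h(Y)$ over the Voronoi cells of the atoms and tracks $Q$-function integrals cell by cell, whereas you split along the MAP-error event and use the pointwise bound $-\log\probP(X\mid Y)\le -\log p_X+\tfrac{\snr}{2}Z^2$. Your nearest-neighbor domination of $\sum_{i\ne j}\exp(-\snr|x_i-x_j|^2/8)$ is legitimate ($\dmin$-separation puts at most two atoms in each distance shell $[k\dmin,(k+1)\dmin)$), and you correctly isolate the one place where $H(X)<\infty$ is needed; both proofs use it for exactly the same purpose.

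The lower bound on $H(X\mid Y)$, however, has a genuine gap. The claim that $H(X\mid Y=y)\ge c_0>0$ uniformly over a window of \emph{fixed} half-width $\delta$ around the midpoint $m$ is false: for $y=m+t$ the posterior odds of $x_a$ against $x_b$ equal $\tfrac{p_a}{p_b}\exp(-\snr\,d\,t)$, so outside a sub-window of width $O(1/(\snr d))$ around $m$ the posterior collapses onto a single atom and its entropy tends to $0$ as $\snr\to\infty$. The inequality you derive from it, $H(X\mid Y)\gtrsim\exp\bigl(-\snr(d/2-\delta)^2/2\bigr)$, cannot be true as stated: for $\delta$ bounded away from $0$ (say $\delta=d/4$) it asserts an exponent $d^2/32<\dmin(X)^2/8$ for $d$ near $\dmin(X)$, contradicting your own upper bound. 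The conclusion $\limsup\le d^2/8$ is right, but this argument does not establish it. Two repairs are available: shrink the window to width $\Theta(1/\snr)$, where the posterior odds stay bounded and the window probability still has exponent exactly $d^2/8$ (the width costs only a subexponential factor); or do as the paper does and keep the integrand $\log\bigl(1+\tfrac{p_b\varphi_\sigma(y-x_b)}{p_a\varphi_\sigma(y-x_a)}\bigr)\,p_a\varphi_\sigma(y-x_a)$ on the half-window $(m,m+\delta)$, where the likelihood ratio is at least $p_b/p_a$, and evaluate the Gaussian density at the far edge to get the exponent $(d+2\delta)^2/8$, which is a valid (weaker) lower bound that recovers $d^2/8$ as $\delta\to0$. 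The paper's version has the added advantage of simply discarding all other atoms as nonnegative terms, so it needs no claim that they are ``exponentially negligible''; your version of that claim requires $d<2\dmin(X)$ and therefore does not, contrary to what you assert, cover the case $\dmin(X)=0$, where atoms may accumulate arbitrarily close to the chosen pair.
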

	\begin{proof}[Proof]
		Suppose $X$ is discrete and has entropy $H(X) < \infty$. We will show the above result by deriving upper and lower bounds to $H(X \mid Y)$ that have the same exponential rate in $\snr \eqqcolon \frac{1}{\sigma^2}$. 
		We also use the following notation: let $\varphi_{\sigma}(x)$ denote the density of a Gaussian random variable with mean zero and variance $\sigma^2$, i.e., $\varphi_{\sigma}(x) \coloneqq \frac{1}{\sqrt{2\pi\sigma^2}} \exp(-\frac{x^2}{2\sigma^2})$, and let $\varphi \coloneqq \varphi_1$ denote the standard Gaussian density.
		
		\paragraph{(a) Lower bound to $H(X \mid Y)$. } 
		Fix $\epsilon > 0$. 
		Let $x_0$ and $x_1$ be such that $x_0 < x_1$ and $\Delta = x_1 - x_0 < \dmin + \epsilon$. Note that $x_0,x_1,\Delta$ depend on $\epsilon$ but we do not write this dependence explicitly. Since $H(X \mid Y)$ is the average of $\log\frac1{p_{X|Y}}(x|y)$ (which is always non-negative) over the joint distribution $(X,Y)$, we can lower bound $H(X \mid Y)$ as
		\begin{align*}
			H(X \mid Y) &= \sum_{x \in \calX}  \intR \log \frac{\sum_{x' \in \calX} p_{x'} \varphi_{\sigma}(y-x')}{p_{x} \varphi_{\sigma}(y - x)}\ p_{x}\varphi_{\sigma}(y-x) \d y\\
			&\geq  \intR \log \left(1+\frac{ p_{x_1} \varphi_{\sigma}(y-x_1)}{p_{x_0} \varphi_{\sigma}(y - x_0)}\right) p_{x_0}\varphi_{\sigma}(y-x_0) \d y,
		\end{align*}
		by considering only the $x_0$ term in the outer sum and only the $x_0$ and $x_1$ terms in the sum inside $\log$. 
		We now further lower bound this integral by picking an arbitrary $\delta > 0$ and restricting the domain of integration to the interval $(\frac{x_0 + x_1}{2}, \frac{x_0 + x_1}{2}+\delta)$, i.e.,
		\begin{align*}
			H(X \mid Y) &\geq  \int_{\frac{x_0 + x_1}{2}}^{\frac{x_0 + x_1}{2}+\delta} \log \left(1+\frac{ p_{x_1} \varphi_{\sigma}(y-x_1)}{p_{x_0} \varphi_{\sigma}(y - x_0)}\right) p_{x_0}\varphi_{\sigma}(y-x_0) \d y.
		\end{align*}
		Consider the expression $$\frac{ p_{x_1} \varphi_{\sigma}(y-x_1)}{p_{x_0} \varphi_{\sigma}(y - x_0)} = \frac{ p_{x_1} \exp(-\frac{(y-x_1)^2}{2\sigma^2})}{p_{x_0} \exp(-\frac{(y-x_0)^2}{2\sigma^2})} = \frac{p_{x_1}}{p_{x_0}}\exp\left(\frac{1}{\sigma^2}(x_1-x_0)(y - \tfrac{x_0 + x_1}{2})\right).$$
		This is increasing in $y$ as $x_1 > x_0$, so we can obtain a lower bound to the above expression by substituting $y = \frac{x_0 + x_1}{2}$, which makes the argument of the exponent vanish, giving 
		\begin{align*}
			H(X \mid Y) &\geq p_{x_0}\log \left(1+\tfrac{p_{x_1}}{p_{x_0}} \right) \int_{\frac{x_0 + x_1}{2}}^{\frac{x_0 + x_1}{2}+\delta}  \varphi_{\sigma}(y-x_0) \d y \\
			&= p_{x_0}\log \left(1+\tfrac{p_{x_1}}{p_{x_0}} \right) \int_{\frac{\Delta}{2\sigma}}^{\frac{\Delta}{2\sigma} + \frac{\delta}{\sigma}}  \varphi(y) \d y,
		\end{align*}
		with the second line obtained by making the substitution $\frac{y-x_0}{\sigma} \mapsto y$. 
		The integrand $\varphi(y)$ is decreasing for $y$ in the domain of integration (positive $y$), so we can lower bound the integrand by substituting $y = \frac{\Delta}{2\sigma} + \frac{\delta}{\sigma}$, i.e.
		\begin{align*}
			H(X \mid Y) &\geq p_{x_0}\log \left(1+\tfrac{p_{x_1}}{p_{x_0}} \right) \cdot \tfrac{1}{\sqrt{2\pi \sigma^2}} \exp\left(- \tfrac{(\Delta + 2\delta)^2}{8\sigma^2}\right) \cdot \tfrac{\delta}{\sigma},
		\end{align*}
		for every $\delta, \epsilon > 0$. Hence, the logarithm of the conditional entropy is lower bounded as
		\begin{align*}
			\log H(X \mid Y) \geq \log \left[\tfrac{1}{\sqrt{2\pi \sigma^2}}\tfrac{\delta}{\sigma} p_{x_0}\log \left(1+\tfrac{p_{x_1}}{p_{x_0}} \right) \right] - \tfrac{(\Delta + 2\delta)^2}{8\sigma^2},
		\end{align*}
		which implies that the exponential rate as $\sigma \to 0$ (i.e., $\snr \to \infty$) is 
		\begin{align*}
			\lim_{\sigma \to 0} \sigma^2 \log H(X \mid Y) &\geq \lim_{\sigma \to 0} \sigma^2\log \left[\tfrac{1}{\sqrt{2\pi \sigma^2}}\tfrac{\delta}{\sigma} p_{x_0}\log \left(1+\tfrac{p_{x_1}}{p_{x_0}} \right) \right] - \tfrac{(\Delta + 2\delta)^2}{8}\\
			&= -\tfrac{(\Delta + 2\delta)^2}{8},
		\end{align*}
		for every $\delta, \epsilon > 0$. Letting $\delta \to 0$, we have 
		$$\lim_{\sigma \to 0} \sigma^2 \log H(X \mid Y) \geq -\tfrac{\Delta^2}{8} > -\tfrac{(\dmin + \epsilon)^2}{8},$$
		as $\Delta < \dmin + \epsilon$, by our choice of $x_0, x_1$. Now letting $\epsilon \to 0$, we have the desired result, i.e., $\lim_{\sigma \to 0} \sigma^2 \log H(X \mid Y) \geq -\tfrac{\dmin^2}{8}$. (Note that we do not require $\dmin > 0$ anywhere, the proof also works for $\dmin = 0$.)

		\paragraph{(b) Upper bound to $H(X \mid Y)$. }
		\sloppy
		
		If $\dmin = 0$, then we use the trivial upper bound $H(X \mid Y) \leq H(X)$, as this gives $\lim_{\sigma \to 0}  \sigma^2\log H(X \mid Y) \leq \lim_{\sigma \to 0} \sigma^2 \log H(X) = 0 = -\dmin^2/8$. 
		For the remainder of this upper bound, assume that $\dmin > 0$. 
		Hence, the support is either finite, or if it is infinite, there are no finite accumulation points. 
		
		In either case, let $\calX = (x_i)_{i \in \calI}$ be the set of atoms of $X$, with $\calI = \{1,\dots,n\}$ if $X$ is finite and $\calI = \bbZ$ if $X$ is infinite. We may, without loss of generality, assume that $x_i < x_{i+1}$ for all $i$, as there are no accumulation points. (This may not be possible if the atoms of $X$ have an accumulation point, consider the example $\calX = \{0\} \cup \{-1-1/n : n \in \bbZ\} \cup \{1 + 1/n : n \in \bbZ\}$.)

		By definition, we have $h(Y) = -\intR  f_Y(y) \log f_Y(y) \d y$, with $f_Y(y) = \sum_{x \in \calX} p_x  \frac{1}{\sqrt{2\pi \sigma^2}} \exp\left(-\frac{(y-x)^2}{2\sigma^2}\right) = \sum_{i \in \calI} p_i \frac{1}{\sqrt{2\pi \sigma^2}} \exp\left(-\frac{(y-x_i)^2}{2\sigma^2}\right)$ being the density of $Y$. This can be simplified as
		\begin{align*}
			h(Y) &= -\intR  f_Y(y) \log \left[\sum_{j \in \calI} p_j \tfrac{1}{\sqrt{2\pi \sigma^2}}\exp\left(-\frac{(y-x_j)^2}{2\sigma^2}\right)\right] \d y\\
			&= -\intR f_Y(y) \log \left[\sum_{j \in \calI} p_j \exp\left(-\frac{(y-x_j)^2}{2\sigma^2}\right)\right] \d y - \intR f_Y(y) \log\tfrac{1}{\sqrt{2\pi \sigma^2}}\d y\\
			&= \frac12\log(2\pi\sigma^2)-\intR \sum_{i \in \calI} p_i \tfrac{1}{\sqrt{2\pi \sigma^2}}\exp\left(-\frac{(y-x_i)^2}{2\sigma^2}\right) \log \left[\sum_{j \in \calI} p_j \exp\left(-\frac{(y-x_j)^2}{2\sigma^2}\right)\right] \d y,
		\end{align*}
		as $f_Y$ integrates to 1.  For each $y$, let $i^*(y) \in \calI$ be the index of the atom of $X$ that is closest to $y$, i.e., $i^*(y) = \argmin_i |y - x_i|$ (in case of ties, choose one arbitrarily --- there will only be countably many such $y$'s; a minimizing $i^*$ is also guaranteed to exist for all $y \in \bbR$ as $\dmin > 0$). Also let $d^*(y) = y - x_{i^*(y)}$ and $d_i(y) = y - x_i$. We write simply $i^*$, $d^*$ and $d_i$ when $y$ is clear from context. 
		
		As $\sum_{j \in \calI} p_j = 1$, we have that $\sum_{j \in \calI} p_j \exp\left(-\tfrac{d_j^2}{2\sigma^2}\right)  \leq 1$, and the integrand above is negative.
		Then, we can lower bound $h(Y)$ by considering only the term corresponding to $i = i^*(y)$ in the summation over $i$, to obtain
		\begin{align*}
			h(Y) &= \frac12\log(2\pi\sigma^2) - \intR p_{i^*} \tfrac{1}{\sqrt{2\pi \sigma^2}}\exp\left(-\tfrac{{d^*}^2}{2\sigma^2}\right) \log \left[\sum_{j \in \calI} p_j \exp\left(-\tfrac{d_j^2}{2\sigma^2}\right) \right]\d y.
		\end{align*}
		Note that $H(X \mid Y) = H(X) - I(X;Y) = H(X) - h(Y) + h(Y \mid X) = H(X) + \frac{1}{2}\log(2\pi e \sigma^2) - h(Y)$, as $h(Y \mid  X) = h(Z)$, which gives
		\begin{align}
			H(X \mid Y) &= H(X) + \frac12 + \intR p_{i^*} \tfrac{1}{\sqrt{2\pi \sigma^2}}\exp\left(-\tfrac{{d^*}^2}{2\sigma^2}\right) \log \left[\sum_{j \in \calI} p_j \exp\left(-\tfrac{d_j^2}{2\sigma^2}\right) \right]\d y . \label{eqn: h_x_y}
		\end{align}
		We re-write the integral by splitting $\log \left[\sum_{j \in \calI} p_j \exp\left(-\tfrac{d_j^2}{2\sigma^2}\right)\right]$ as $\log \left[p_{i^*} \exp\left(-\tfrac{{d^*}^2}{2\sigma^2}\right)\right] + \log \left[1 + \sum_{j\in \calI \setminus \{i^*\}} \frac{p_j}{p_{i^*}} \exp\left(\tfrac{{d^*}^2-d_j^2}{2\sigma^2}\right)\right]$. Hence, we have
		\begin{align}
			H(X \mid Y) &= H(X) + \frac12 + \intR p_{i^*} \tfrac{1}{\sqrt{2\pi \sigma^2}}\exp\left(-\tfrac{{d^*}^2}{2\sigma^2}\right) \log \left[p_{i^*} \exp\left(-\tfrac{{d^*}^2}{2\sigma^2}\right)\right]\d y  \nonumber \\
			&\qquad + \intR p_{i^*} \tfrac{1}{\sqrt{2\pi \sigma^2}}\exp\left(-\tfrac{{d^*}^2}{2\sigma^2}\right) \log \left[1 + \sum_{j\in \calI \setminus \{i^*\}} \frac{p_j}{p_{i^*}} \exp\left(\tfrac{{d^*}^2-d_j^2}{2\sigma^2}\right)\right] \d y. \label{eqn: h_x_y_ub}
		\end{align}
		
		To compute these integrals, define $\Delta_i = x_i - x_{i-1}$ for $i \in \bbZ$ when the support is infinite.  When $|\calX| = n$, define $\Delta_i = x_i - x_{i-1}$ for $i = 2,\dots,n$, $\Delta_0 = -\infty$ and $\Delta_n = \infty$.
		Also define $V_i$ to be the open interval $(x_i - \Delta_i/2, x_i + \Delta_{i+1}/2) = (\frac{x_i + x_{i-1}}{2}, \frac{x_{i+1} + x_{i}}{2})$ for $i \in \calI$. Observe that we have $i^*(y) = i$ for all $y$ in $V_i$, and that the $V_i$'s partition $\bbR$.
		We thus compute the first integral in \eqref{eqn: h_x_y_ub}, as
		\begin{align*}
			&\quad \ \intR p_{i^*} \tfrac{1}{\sqrt{2\pi \sigma^2}}\exp\left(-\tfrac{{d^*}^2}{2\sigma^2}\right) \log \left[p_{i^*} \exp\left(-\tfrac{{d^*}^2}{2\sigma^2}\right)\right]\d y\\
			&= \sum_{i \in \calI} \int_{V_i} p_{i} \tfrac{1}{\sqrt{2\pi \sigma^2}}\exp\left(-\tfrac{{d_i}^2}{2\sigma^2}\right) \log \left[p_{i} \exp\left(-\tfrac{{d_i}^2}{2\sigma^2}\right)\right]\d y\\
			&= \sum_{i \in \calI} \left\{ p_i\log p_i \int_{V_i}  \tfrac{1}{\sqrt{2\pi \sigma^2}}\exp\left(-\tfrac{(y-x_i)^2}{2\sigma^2}\right) \d y - p_i\int_{V_i} \frac{(y-x_i)^2}{2\sigma^2}\tfrac{1}{\sqrt{2\pi \sigma^2}}\exp\left(-\tfrac{(y-x_i)^2}{2\sigma^2}\right) \d y \right\}\\
			&= \sum_{i \in \calI} \left\{ p_i\log p_i \left[Q\left(\tfrac{-\Delta_i}{2\sigma}\right)-Q\left(\tfrac{\Delta_{i+1}}{2\sigma}\right)\right] - \frac{p_i}{2}\left[R\left(\tfrac{-\Delta_i}{2\sigma}\right)-R\left(\tfrac{\Delta_{i+1}}{2\sigma}\right)\right] \right\},
		\end{align*}
		where $Q$ is the standard $Q$-function and $R(z) = \int_z^{\infty} \frac1{\sqrt{2\pi}}u^2\exp(-\frac{u^2}{2}) \d u$.
		The $Q$-function satisfies the following useful properties: For $z > 0$, $Q(z)$ satisfies $\frac{z^2}{1+z^2} \frac{1}{\sqrt{2\pi}z}\exp(-\frac{z^2}{2}) \leq Q(z) \leq \frac{1}{\sqrt{2\pi}z}\exp(-\frac{z^2}{2})$. We also have, by definition, that $Q(-z) = 1-Q(z)$.
		The $Q$-function can be related to $R(z)$ by integrating the latter by parts to obtain $R(z) = Q(z) + \frac{z}{\sqrt{2\pi}}\exp(-\frac{z^2}{2})$. Hence, the above expression is equal to 
		\begin{align*}
			&\quad \ \sum_{i \in \calI} \left\{ \left(p_i\log p_i-\tfrac{p_i}{2}\right) \left[Q\left(\tfrac{-\Delta_i}{2\sigma}\right)-Q\left(\tfrac{\Delta_{i+1}}{2\sigma}\right)\right] - \tfrac{p_i}{2\sqrt{2\pi}}\left[\tfrac{-\Delta_i}{2\sigma}\exp\left(\tfrac{-\Delta_i^2}{8\sigma^2}\right)-\tfrac{\Delta_{i+1}}{2\sigma}\exp\left(\tfrac{-\Delta_{i+1}^2}{8\sigma^2}\right)\right] \right\} \\ 
			&= \sum_{i \in \calI} \left\{ \left(p_i\log p_i-\tfrac{p_i}{2}\right) \left[1-Q\left(\tfrac{\Delta_i}{2\sigma}\right)-Q\left(\tfrac{\Delta_{i+1}}{2\sigma}\right)\right] + \tfrac{p_i}{2\sqrt{2\pi}}\left[\tfrac{\Delta_i}{2\sigma}\exp\left(\tfrac{-\Delta_i^2}{8\sigma^2}\right)+\tfrac{\Delta_{i+1}}{2\sigma}\exp\left(\tfrac{-\Delta_{i+1}^2}{8\sigma^2}\right)\right] \right\} \\
			&= -H(X)-\tfrac12 + \sum_{i \in \calI} \left\{ \left(p_i\log \tfrac{1}{p_i}+\tfrac{p_i}{2}\right) \left[Q\left(\tfrac{\Delta_i}{2\sigma}\right)+Q\left(\tfrac{\Delta_{i+1}}{2\sigma}\right)\right] + \right.\\
			&\qquad \qquad \qquad \qquad \qquad \qquad  \qquad \qquad \qquad \qquad \qquad \left.\tfrac{p_i}{2\sqrt{2\pi}}\left[\tfrac{\Delta_i}{2\sigma}\exp\left(\tfrac{-\Delta_i^2}{8\sigma^2}\right)+\tfrac{\Delta_{i+1}}{2\sigma}\exp\left(\tfrac{-\Delta_{i+1}^2}{8\sigma^2}\right)\right] \right\}.
		\end{align*}
		Note that the $H(X) + \frac12$ from \eqref{eqn: h_x_y_ub} cancels. We now compute the second integral from \eqref{eqn: h_x_y_ub}, by using the upper bound $\log(1+z) \leq z$, as
		\begin{align*}
			& \quad \ \intR p_{i^*} \tfrac{1}{\sqrt{2\pi \sigma^2}}\exp\left(-\tfrac{{d^*}^2}{2\sigma^2}\right) \log \left[1 + \sum_{j \in \calI \setminus \{i^*\}} \frac{p_j}{p_{i^*}} \exp\left(\tfrac{{d^*}^2-d_j^2}{2\sigma^2}\right)\right] \d y\\
			&\leq \intR \sum_{j \in \calI \setminus \{i^*\}} p_j \tfrac{1}{\sqrt{2\pi \sigma^2}}\exp\left(-\tfrac{{d_j}^2}{2\sigma^2}\right) \d y\\
			&= \intR \sum_{j \in \calI} p_j \tfrac{1}{\sqrt{2\pi \sigma^2}}\exp\left(-\tfrac{{d_j}^2}{2\sigma^2}\right) \d y - \intR p_{i^*} \tfrac{1}{\sqrt{2\pi \sigma^2}}\exp\left(-\tfrac{{d^*}^2}{2\sigma^2}\right) \d y.
		\end{align*}
		The first term here is equal to 1, as interchanging the integral and summation over $j$ (by Fubini's theorem) gives 
		\begin{align*}
			\intR \sum_{j \in \calI} p_j \tfrac{1}{\sqrt{2\pi \sigma^2}}\exp\left(-\tfrac{{d_j}^2}{2\sigma^2}\right) \d y = \sum_{j \in \calI} p_j \intR \tfrac{1}{\sqrt{2\pi \sigma^2}}\exp\left(-\tfrac{{(y-x_j)}^2}{2\sigma^2}\right) \d y = \sum_{j \in \calI} p_j \bbE_{\tilde Y \sim \calN(x_j, \sigma^2)}[1] = 1.
		\end{align*}
		Meanwhile, the second term can be computed as
		\begin{align*}
			\intR p_{i^*} \tfrac{1}{\sqrt{2\pi \sigma^2}}\exp\left(-\tfrac{{d^*}^2}{2\sigma^2}\right) \d y &= \sum_{i \in \calI} p_i \int_{V_i} \tfrac{1}{\sqrt{2\pi \sigma^2}}\exp\left(-\tfrac{(y-x_i)^2}{2\sigma^2}\right) \d y\\
			&= \sum_{i \in \calI} p_i \left[Q\left(\tfrac{-\Delta_i}{2\sigma}\right)-Q\left(\tfrac{\Delta_{i+1}}{2\sigma}\right)\right]\\
			&= \sum_{i \in \calI} p_i \left[1 - Q\left(\tfrac{\Delta_i}{2\sigma}\right)-Q\left(\tfrac{\Delta_{i+1}}{2\sigma}\right)\right]\\
			&= 1 - \sum_{i \in \calI} p_i \left[Q\left(\tfrac{\Delta_i}{2\sigma}\right)+Q\left(\tfrac{\Delta_{i+1}}{2\sigma}\right)\right],
		\end{align*}
		or equivalently, the second integral from \eqref{eqn: h_x_y_ub} is upper bounded by $\sum_{i \in \calI} p_i \left[Q\left(\tfrac{\Delta_i}{2\sigma}\right)+Q\left(\tfrac{\Delta_{i+1}}{2\sigma}\right)\right]$. Putting all these together into \eqref{eqn: h_x_y_ub}, we have
		\begin{align*}
			H(X \mid Y) \leq \sum_{i \in \calI} \Big\{ \left(p_i\log \tfrac{1}{p_i}+\tfrac{3p_i}{2}\right) \left[Q\left(\tfrac{\Delta_i}{2\sigma}\right)+Q\left(\tfrac{\Delta_{i+1}}{2\sigma}\right)\right] &+ \frac{p_i}{2\sqrt{2\pi}}\Big[\tfrac{\Delta_i}{2\sigma}\exp\left(\tfrac{-\Delta_i^2}{8\sigma^2}\right)\\
			&\quad + \tfrac{\Delta_{i+1}}{2\sigma}\exp\left(\tfrac{-\Delta_{i+1}^2}{8\sigma^2}\right)\Big] \Big\}.
		\end{align*}
		We know that $Q(z)$ is decreasing for all $z$ and $z\exp(-\frac{z^2}{2})$ is decreasing for all $z > 1$. As $\dmin > 0$, for any $\sigma < \dmin(X)/2 \leq \Delta_i/2$ for all $i$, we have that each of the above $Q$ and $\exp$ terms is decreasing in its argument. Hence, we can further upper bound $H(X \mid Y)$ as
		\begin{align*}
			H(X \mid Y) &\leq Q\left(\tfrac{\dmin}{2\sigma}\right)(2H(X)+3) + \frac{1}{\sqrt{2\pi}}\tfrac{\dmin}{2\sigma}\exp\left(\tfrac{-\dmin^2}{8\sigma^2}\right)\\
			&\leq \exp\left(\tfrac{-\dmin^2}{8\sigma^2}\right) \left[H(X) + \tfrac32 + \tfrac1{\sqrt{2\pi}} \tfrac{\dmin}{2\sigma}\right],
		\end{align*}
		using the fact that $Q(z) \leq \frac{1}{2}\exp(-\frac{z^2}{2})$. This gives us the desired upper bound to $\lim_{\sigma \to 0}\sigma^2 \log H(X \mid Y)$, as
		\begin{align*}
			\lim_{\sigma \to 0} \sigma^2 \log H(X \mid Y) \leq \tfrac{-\dmin^2}{8} + \lim_{\sigma \to 0} \sigma^2 \log\left(H(X) + \tfrac32 + \tfrac1{\sqrt{2\pi}} \tfrac{\dmin}{2\sigma}\right) = \frac{-\dmin^2}{8},
		\end{align*}
		which completes the proof.
	\end{proof}

	By Lemma~\ref{lem: cond_ent}, if $\dmin(X_1) > \dmin(X_2)$, then there exists some $\snr_0$ such that $H(X_1 \mid X_1 +Z) > H(X_2 \mid X_2+Z)$ for all $\snr > \snr_0$, where $Z \sim \calN(0,1/\snr)$.  Hence, the capacity-achieving distribution as $\snr \to \infty$ is the one that achieves the largest minimum distance among all distributions with $H(X) = h$ and $\bbE[X^2] = 1$, if such a maximizer exists.  We now consider this optimization problem.
	
	\subsection{Minimum distance with entropy constraint} \label{sec: dmin}

	The following proposition guarantees that there is indeed a unique distribution achieving this largest minimum distance, and characterizes this maximizing distribution.

	\sloppy
	\begin{proposition}[Characterizing distribution with largest $\dmin$] \label{prop: dh}  As earlier, define $L(\lambda) = \log\big(\sum_{j \in \bbZ} \exp(-\lambda j^2)\big)$, and let $\lambda_h$ be the value of $\lambda$ solving $L(\lambda) - \lambda L'(\lambda) = h$. 
		The optimization problem 
		\begin{equation*}
			\dist_h = \sup_{\substack{\bbE[X^2] = 1,\\H(X) = h}} \dmin(X)
		\end{equation*}
		is solved uniquely by $X \sim \calNdisc{\beta_h}(\lambda_h)$ with $\beta_h = 1/\sqrt{-L'(\lambda_h)}$. This immediately implies that
		$\dist_h = \beta_h = 1/\sqrt{-L'(\lambda_h)}$.      
	\end{proposition}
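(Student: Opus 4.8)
The plan is to reduce the infinite-dimensional problem over distributions $X$ to a one-dimensional optimization over a scalar ``spread'', by separating the roles of the \emph{positions} and the \emph{probabilities} of the atoms. First I would argue that an optimal $X$ may be taken to have zero mean: centering any $X$ leaves $\dmin(X)$ and $H(X)$ unchanged while decreasing $\bbE[X^2]$, after which rescaling to restore $\bbE[X^2]=1$ only increases $\dmin$; hence we may assume $\bbE[X]=0$, so that $\bbE[X^2]=\var(X)=1$. Writing the atoms in increasing order as $(y_i)$ with probabilities $(q_i)$ and setting $d=\dmin(X)$, the defining property $|y_i-y_j|\ge d\,|i-j|$ together with the identity $\var(X)=\tfrac12\sum_{i,j}q_iq_j(y_i-y_j)^2$ gives $1=\var(X)\ge d^2\cdot\tfrac12\sum_{i,j}q_iq_j(i-j)^2 = d^2\,\var_q(i)$, where $\var_q$ denotes the variance of the integer index under $q$. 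Thus $d\le 1/\sqrt{\var_q(i)}$, and the problem is reduced to the purely combinatorial question of minimizing the index variance $\var_q(i)$ over probability vectors $q$ on $\bbZ$ subject to $H(q)=h$.

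For the lower bound on $\var_q(i)$ I would use the Gibbs (cross-entropy) inequality against the centered discrete Gaussian. Taking $p_i=\exp(-L(\lambda)-\lambda i^2)$ as a reference, for any $q$ one has $H(q)\le -\sum_i q_i\log p_i = L(\lambda)+\lambda\,\bbE_q[i^2]$ for every $\lambda>0$, with equality iff $q=p$. When $q$ is centered this reads $h\le L(\lambda)+\lambda\,\var_q(i)$; optimizing the free parameter (the minimizing $\lambda$ solves $-L'(\lambda)=\var_q(i)$) shows that $h$ is at most the entropy of the discrete Gaussian of variance $\var_q(i)$. Since $\lambda\mapsto L(\lambda)-\lambda L'(\lambda)$ is strictly decreasing (its derivative is $-\lambda L''(\lambda)<0$, as $L''=\var(i^2)>0$) and maps onto $(0,\infty)$, the equation $L(\lambda)-\lambda L'(\lambda)=h$ has a unique root $\lambda_h$, and monotonicity of entropy in variance forces $\var_q(i)\ge -L'(\lambda_h)=\beta_h^{-2}$. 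Combined with the reduction above this gives $d\le\beta_h$, and equality throughout identifies the optimizer as $\calNdisc{\beta_h}(\lambda_h)$; achievability is immediate since this distribution has $\dmin=\beta_h$, $\var=-\beta_h^2L'(\lambda_h)=1$, and entropy $L(\lambda_h)-\lambda_h L'(\lambda_h)=h$.

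The main obstacle is that the Gibbs step gives the clean bound only for \emph{centered} $q$: for a general $q$ with index mean $m$ one has $\bbE_q[i^2]=\var_q(i)+m^2$, and since $m$ can be taken in $(-\tfrac12,\tfrac12]$ this leaves an irreducible slack of up to $\tfrac14$ in the variance---precisely the slack realized by a discrete Gaussian centered at a half-integer, which is a second critical point of the Lagrangian for minimizing $\var_q(i)$ at fixed $H(q)$. Ruling this out is exactly what makes the minimizer unique and equal to the lattice-\emph{centered} discrete Gaussian. I would handle it by a rearrangement argument: replacing $q$ by its symmetric decreasing rearrangement $q^\downarrow$ (largest mass to index $0$, the next two to $\pm1$, and so on) preserves the multiset of probabilities, hence $H(q^\downarrow)=H(q)=h$, while minimizing the index variance among all placements of these masses, so that $\var_{q^\downarrow}(i)\le\var_q(i)$ and $q^\downarrow$ is centered; applying the centered bound to $q^\downarrow$ then yields $\var_q(i)\ge\beta_h^{-2}$ for all admissible $q$. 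The delicate part is justifying that symmetric decreasing rearrangement minimizes the variance (a rearrangement inequality for the convex cost $(i-j)^2$, complicated by the freedom in the center), and then tracing the equality conditions---tightness in both the rearrangement step and the Gibbs inequality---to conclude that the unique optimizer is $\calNdisc{\beta_h}(\lambda_h)$.
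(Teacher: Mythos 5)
Your reduction to the index variance is sound, and your Gibbs-inequality step is a clean, rigorous substitute for the paper's Lagrangian computation in the lattice-centered case: for $q$ with $\bbE_q[i]=0$ it correctly yields $\var_q(i)\ge -L'(\lambda_h)$ and hence $\dmin\le\beta_h$. The genuine gap is exactly where you locate it --- the off-center case --- but the fix you propose does not close it. First, the symmetric decreasing rearrangement $q^\downarrow$ is \emph{not} centered in general: masses $0.5,0.3,0.2$ placed at $0,1,-1$ have mean $0.1$, so the ``centered bound'' cannot be applied to $q^\downarrow$. Second, and more fundamentally, no rearrangement argument can dispose of the half-integer-centered competitor. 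Take $q_i\propto\exp(-\lambda(i+\tfrac12)^2)$, whose masses come in equal pairs; its symmetric decreasing rearrangement has exactly the same variance (e.g.\ for $q_{-1}=q_0=\tfrac{1-\epsilon}{2}$, $q_{-2}=q_1=\tfrac{\epsilon}{2}$, both arrangements give variance $\tfrac14+2\epsilon$), and its mean after rearrangement is $\tfrac12$, not $0$. The underlying obstruction is that the sorted distances from an integer center, $0,1,1,2,2,\dots$, do not dominate those from a half-integer center, $\tfrac12,\tfrac12,\tfrac32,\tfrac32,\dots$, term by term --- neither sequence beats the other --- so the half-integer-centered discrete Gaussian is a genuine second stationary point that must be \emph{beaten quantitatively}, not ruled out combinatorially.

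That quantitative comparison is the step the paper supplies and your proposal omits. Writing $L_a(\lambda)=\log\sum_{j}\exp(-\lambda(j+a)^2)$, the paper reduces (via a mean-zero observation) to comparing $a=0$ with $a=\tfrac12$ at equal entropy, and then invokes the analytic fact $L_0\ge L_{1/2}$ (Poisson summation) together with the tangent lemma for convex decreasing functions (Lemma~\ref{lem: tangent}) to conclude that the tangent to $L_0$ with $y$-intercept $h$ has the smaller slope magnitude, i.e.\ the integer-centered discrete Gaussian has the smaller variance. Your Gibbs step could be retained by testing a $q$ with half-integer index mean against the reference $p_i\propto\exp(-\lambda(i+\tfrac12)^2)$, which gives $\var_q(i)\ge -L_{1/2}'(\tilde\lambda_h)$; but you would then still need $-L_{1/2}'(\tilde\lambda_h)\ge -L_0'(\lambda_h)$, which is precisely the theta-function comparison above and requires an analytic input beyond rearrangement (you would also need to handle index means strictly between $0$ and $\tfrac12$, e.g.\ by showing that recentering strictly helps there). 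Until that comparison is in place, the bound $\dmin\le\beta_h$ --- and with it the uniqueness claim --- is not established.
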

	
	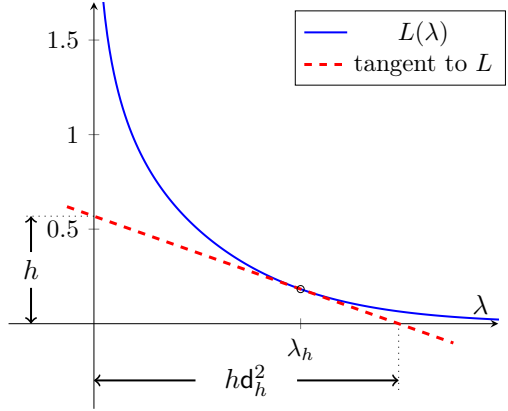
\begin{figure}[thbp]
		\centering
		\begin{tikzpicture}[scale=0.95]
			\begin{axis}[
				axis lines=middle,
				xlabel={$\lambda$},
				ylabel={},
				legend style={font=\small, at={(0.8,0.98)}, anchor=north},
				grid=none,
				ticklabel style={font=\small},
				xmin=-0.95, xmax=4.5,
				ymin=-0.45, ymax=1.7,
				xtick={\xzero},
				ytick={},
				xticklabels={$\lambda_h$}
				]
				
				\addplot+[no markers, solid, thick] table [x=lambda, y=La_a0] {data/La_data.dat};
				\addlegendentry{$L(\lambda)$}

				\def\xzero{2.3}
				\def\yzero{0.18292118389531853}   
				\def\slope{-0.1676711141129772}    
				
				\pgfmathsetmacro{\yint}{\yzero - \slope*\xzero}
				\pgfmathsetmacro{\xint}{\xzero - \yzero/\slope}
				
				\addplot+[no markers, dashed, very thick, domain=-0.3:4, samples=2]
				({x}, {\yzero + \slope*(x-\xzero)});
				\addlegendentry{tangent to $L$}
				\node[circle, draw=black, inner sep=1pt]
				at (axis cs:\xzero,\yzero) {};
				
				\draw[<-, thick] (axis cs:-0.7,0) -- (axis cs:-0.7,\yint/3);
				\draw[draw=none] (axis cs:-0.7,\yint/3) -- (axis cs:-0.7,\yint*2/3)
				node[midway] {$h$};
				\draw[->, thick] (axis cs:-0.7,2*\yint/3) -- (axis cs:-0.7,\yint);
				
				\draw[<-, thick] (axis cs:0,-0.3) -- (axis cs:\xint/3,-0.3);
				\draw[draw=none] (axis cs:\xint/3,-0.3) -- (axis cs:\xint*2/3,-0.3)
				node[midway] {$h \dist_h^2$};
				\draw[->, thick] (axis cs:2*\xint/3,-0.3) -- (axis cs:\xint,-0.3);
				
				\draw[dotted] (axis cs:-0.75,\yint) -- (axis cs:0,\yint);
				
				\draw[dotted] (axis cs:\xint,-0.35) -- (axis cs:\xint,0);
				
			\end{axis}
		\end{tikzpicture}
		\caption{$L(\lambda) = \log\big(\sum_{i \in \bbZ} \exp(-\lambda i^2)\big)$ versus $\lambda$. The $x$- and $y$-intercepts of the tangents to $L$ are $h \dist_h^2$ and $h$ respectively.}
		\label{fig: Llambda}
	\end{figure}
	
	A geometric characterization that may be of interest is the following, shown in Fig.~\ref{fig: Llambda}: $\lambda_h$ is the value of $\lambda$ at which $h = L(\lambda) - \lambda L'(\lambda)$.  The right-hand side is exactly the $y$-intercept of the tangent to $L$ drawn at $\lambda$, hence $\lambda_h$ is the point at which the tangent to $L$ has a $y$-intercept equal to the entropy $h$.  Further, the magnitude of the slope at $\lambda_h$ is $-L'(\lambda_h)$, which is equal to $1/\dist_h^2$, and hence, the $x$-intercept is $h \dist_h^2$.

	\begin{proof} Since there always exists at least one distribution with $\dmin(X) > 0$ satisfying the entropy and second moment constraints, we may assume without loss of generality that the $x_i$'s are ordered.
		The optimization problem in question can be written as
		\begin{equation}
			\begin{aligned}
				\dist_h \quad =  &\sup_{(p_i, x_i)_i} &\inf_{i}\ |x_{i+1} - x_i| \qquad& =&\qquad  \sup_{(p_i, x_i)_i} \quad &\inf_{i}\ |x_{i+1} - x_i|\\
				\qquad &\quad \text{s.t.} \quad  & \sum_{i \in \calI} p_i x_i^2 = 1, \qquad & & \text{s.t.} \quad  & \sum_{i \in \calI} p_i x_i^2 \leq 1,\\
				&& \sum_{i \in \calI} p_i \log \tfrac1{p_i} = h &  && \sum_{i \in \calI} p_i \log \tfrac1{p_i} = h,
				\label{eqn: dh}
			\end{aligned}
		\end{equation}
		with the equality following from the observation that any feasible point $(p_i, x_i)_i$ to the optimization problem on the right can be converted to a feasible point to the problem on the left by simply multiplying each $x_i$ by $\frac{1}{\sqrt{\sum_{i \in \calI} p_i x_i^2}} \geq 1$, which may only increase the objective $\inf_{i}\ |x_{i+1} - x_i|$.
		Further, we can see that the optimization problem \eqref{eqn: dh} is equivalent to 
		\begin{equation}
			\begin{aligned}
				\var_h \coloneqq \qquad   &\inf_{(p_i, x_i)_i} &\sum_{i \in \calI} p_i x_i^2\\
				&\quad \text{s.t.} &\inf_{i} |x_{i+1} - x_i| \geq 1,\\
				&& \sum_{i \in \calI} p_i \log \tfrac1{p_i} = h, 
				\label{eqn: var}
			\end{aligned}
		\end{equation}
		in the sense that the optimal distributions in the two problems are simply scaled versions of each other, as follows. Let $\epsilon > 0$ be arbitrary but smaller than $\dist_h$ (which is necessarily positive for $h > 0$).
		Suppose $(p_i, x_i)_i$ is feasible in \eqref{eqn: dh} and has $\inf_i |x_{i+1} - x_i| > \dist_h - \epsilon > 0$. Replacing each $x_i$ by $\tilde x_i = \frac{x_i}{\dist_h - \epsilon}$, we have a feasible point in \eqref{eqn: var}, which satisfies $\sum_{i \in \calI} p_i \tilde x_i^2 = \frac{1}{\delta^2} \sum_{i \in \calI} p_i x_i^2 \leq \frac{1}{(\dist_h - \epsilon)^2}$.  Then we have  $\var_h = \inf_{(p_i, x_i)_i} \sum_{i \in \calI} p_i x_i^2 \leq \frac{1}{(\dist_h - \epsilon)^2}$, or equivalently, $\var_h (\dist_h - \epsilon)^2 \leq 1$ for arbitrarily small $\epsilon$, and hence $\var_h \dist_h^2 \leq 1$. 
		On the other hand, suppose $(p_i, x_i)_i$ is feasible in \eqref{eqn: var} and has $\sum_{i \in \calI} p_ix_i^2 < \var_h + \epsilon$. Replacing each $x_i$ by $\tilde x_i = \frac{x_i}{\sqrt{\var_h + \epsilon}}$, we have a feasible point in \eqref{eqn: dh}, which satisfies $\inf_i |\tilde x_{i+1} - \tilde x_i| = \frac{1}{\sqrt{\var_h + \epsilon}}\inf_i |x_{i+1} - x_i| \geq \frac{1}{\sqrt{\var_h + \epsilon}}$. Then we have $\dist_h = \sup_{(p_i, x_i)_i} \inf_i |x_{i+1} - x_i| \geq \frac{1}{\sqrt{\var_h + \epsilon}}$, or equivalently, $(\var_h + \epsilon)\dist_h^2 \geq 1$, for arbitrarily small $\epsilon$, which implies that $\var_h \dist_h^2 \geq 1$. Putting the two together, we have $\var_h = \frac{1}{\dist_h^2}$. Further, any solution to \eqref{eqn: dh} can be converted to a solution to \eqref{eqn: var} (or vice-versa) by simply dividing each $x_i$ by $\dist_h$ (or by $\sqrt{\var_h}$, respectively). 
		
		The inequality $\inf_i |x_{i+1} - x_i| \geq 1$ in \eqref{eqn: var} can be strengthened to require $x_{i+1}-x_i = 1$ for all $i$, for the following reason: if $(p_i, x_i)_i$ is such that $\inf_i |x_{i+1} - x_i| \geq 1$, then the distribution with atoms at $\tilde x_i$ given by $\tilde x_{i} = x_{i^*} + (i - i^*)$ with $i^* = \argmin_{i} |x_i|$, necessarily has $|\tilde x_i| \leq |x_i|$ and hence, $\sum_{i \in \calI} p_i \tilde x_i^2 \leq \sum_{i \in \calI} p_i x_i^2$.
		Thus, the solution to \eqref{eqn: var} is evenly spaced unit distance apart, i.e., 
		there exists $a \in [0,1)$ and a contiguous subset $V$ of $\bbZ$ such that 
		the $x_i$ are exactly $a + k$ over all $k \in V$. The solution to \eqref{eqn: dh} is then equally spaced with distance $\dist_h$ such that $\sum_{i \in \calI} p_i x_i^2$ is equal to 1.     
		
		Before proceeding to solve \eqref{eqn: var}, it is useful to note the following relaxation. If we also allow the entropy to be at least $h$ instead of equal to $h$, for each fixed $(x_i)_i$, the optimization problem involves minimizing over $(p_i)_i$, the objective $\sum_{i \in \calI} p_i x_i^2$, which is linear, on a convex set $\sum_{i \in \calI} p_i \log \tfrac1{p_i} \geq h$, whose solution necessarily lies on the boundary. Hence, we solve \eqref{eqn: var}, which is equal to 
		\begin{equation*}    
			\begin{aligned}
				\var_h = \qquad  &\inf_{(p_i, x_i)_i} &\sum_{i \in \calI} p_i x_i^2 & \qquad= &  &\quad \inf_{(p_i)_i, a} &\sum_{i \in \calI} p_i (i + a)^2  \\
				&\ \text{s.t.} & x_{i+1} - x_i = 1 \text{ for all } i,  &  & &\qquad \text{s.t.} & \sum_{i \in \calI} p_i \log \tfrac1{p_i} \geq h,\\
				&& \sum_{i \in \calI} p_i \log \tfrac1{p_i} \geq h &&&& \sum_{i \in \calI} p_i = 1
			\end{aligned}
		\end{equation*}
		Consider first the minimization over $(p_i)_i$.
		This is a convex optimization problem, and hence, the solution $(p_i)_i, a$ must have $\frac{\partial}{\partial p_i} \calL$ equal 0, where $\calL$ is the Lagrangian function with the Lagrange multipliers $\frac1{\lambda} \geq 0$ and $\mu \in \bbR$, given by
		\begin{equation*}
			\calL\big((p_i)_i, a, \lambda, \mu\big) = \sum_{i \in \calI} p_i ( i+a)^2 - \frac1{\lambda} \sum_{i \in \calI} p_i \log \tfrac1{p_i} + \mu \sum_{i \in \calI} p_i.
		\end{equation*}
		The derivative of $\calL$ with respect to $p_i$ is $(i+a)^2 - \frac1{\lambda} \log\frac1{p_i} + \lambda - \mu$, which implies that
		\begin{equation}    
			p_i = \exp(\mu - \tfrac1{\lambda}) \exp(- \lambda(i+a)^2) = \frac{\exp(- \lambda (i+a)^2)}{\sum_{j\in\bbZ} \exp(- \lambda (j+a)^2)} \text{ for } i \in \bbZ, \label{eqn: pi}
		\end{equation}
		by choosing $\mu$ such that $\sum_{i \in \calI} p_i = 1$. Let $L_a(\lambda) = \log \sum_{j \in \bbZ} \exp(- \lambda (j+a)^2)$. Then the entropy of the distribution \eqref{eqn: pi} with $x_i = i+a$ is given by $L_a(\lambda) - \lambda L_a'(\lambda)$ and the variance is given by $- L_a'(\lambda)$, which is to be minimized over all $a \in [0,1)$. That is,
		\begin{equation*}    
			\begin{aligned}
				\var_h =  &\inf_{a \in [0,1)} &  -L_a'(\lambda) \\
				\qquad  &\ \text{s.t.} & L_a(\lambda) - \lambda L_a'(\lambda) = h.
			\end{aligned}
		\end{equation*}
		
		We claim that the solution to the above problem occurs at $a = 0$, via the following geometric argument. The distribution given by \eqref{eqn: pi} with $x_i = i+a$ has entropy $L_a(\lambda) - \lambda L'_a(\lambda)$ and variance $-L'_a(\lambda)$. The former is the $y$-intercept of the tangent to $L_a$ at $\lambda$, and the latter is the magnitude of the slope of the tangent (as $L'_a < 0$). 
		
		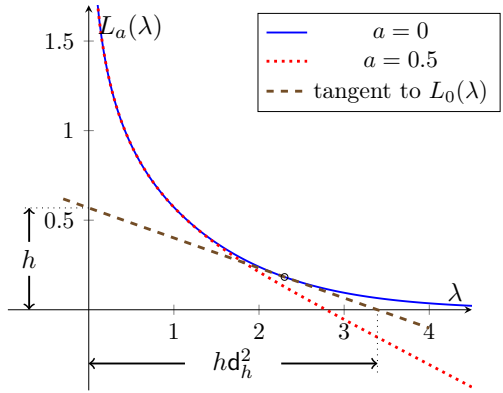
\begin{figure}[!htbp]
			\centering
			\begin{tikzpicture}[scale=0.9]
				\begin{axis}[
					axis lines=middle,
					xlabel={$\lambda$},
					ylabel={$L_a(\lambda)$},
					legend style={font=\small, at={(0.8,0.98)}, anchor=north},
					grid=none,
					ticklabel style={font=\small},
					xmin=-0.95, xmax=4.5,
					ymin=-0.45, ymax=1.7
					]
					\addplot+[no markers, solid, thick] table [x=lambda, y=La_a0] {data/La_data.dat};
					\addlegendentry{$a=0$}
					
					\addplot+[no markers, dotted, very thick] table [x=lambda, y=La_a05] {data/La_data.dat};
					\addlegendentry{$a=0.5$}
					
					\def\xzero{2.3}
					\def\yzero{0.18292118389531853}   
					\def\slope{-0.1676711141129772}    
					
					\pgfmathsetmacro{\yint}{\yzero - \slope*\xzero}
					\pgfmathsetmacro{\xint}{\xzero - \yzero/\slope}
					
					\addplot+[no markers, dashed, very thick, domain=-0.3:4, samples=2]
					({x}, {\yzero + \slope*(x-\xzero)});
					\addlegendentry{tangent to $L_0(\lambda)$}
					\node[circle, draw=black, inner sep=1pt]
					at (axis cs:\xzero,\yzero) {};
					
					\draw[<-, thick] (axis cs:-0.7,0) -- (axis cs:-0.7,\yint/3);
					\draw[draw=none] (axis cs:-0.7,\yint/3) -- (axis cs:-0.7,\yint*2/3)
					node[midway] {$h$};
					\draw[->, thick] (axis cs:-0.7,2*\yint/3) -- (axis cs:-0.7,\yint);
					
					\draw[<-, thick] (axis cs:0,-0.3) -- (axis cs:\xint/3,-0.3);
					\draw[draw=none] (axis cs:\xint/3,-0.3) -- (axis cs:\xint*2/3,-0.3)
					node[midway] {$h \dist_h^2$};
					\draw[->, thick] (axis cs:2*\xint/3,-0.3) -- (axis cs:\xint,-0.3);
					
					\draw[thin, dotted] (axis cs:-0.75,\yint) -- (axis cs:0,\yint);
					
					\draw[thin, dotted] (axis cs:\xint,-0.35) -- (axis cs:\xint,0);
					
				\end{axis}
			\end{tikzpicture}
			\caption{$L_a(\lambda)$ versus $\lambda$ for $ a = 0$ and $a = \frac12$. The $x$- and $y$-intercepts of the tangents to $L_0$ are $h \dist_h^2$ and $h$ respectively.}
			\label{fig: Lalambda}
		\end{figure}
		
		We only need to consider $a = 0$ and $a = \frac12$, as only these give 
		a distribution that has mean zero \cite{canonne2020discrete} --- if the 
		mean is not zero, then translating the distribution to make the mean 
		zero keeps the variance and the entropy the same, while increasing the 
		second moment.
		As can be seen from Figure~\ref{fig: Lalambda}, since $L_0 \geq 
		L_{1/2}$ (proved, e.g., using the Poisson summation formula 
		\cite{canonne2020discrete}), if tangents to $L_0$ and $L_{1/2}$ have 
		the same $y$-intercept, the former must have a larger $x$-intercept. 
		More formally, Lemma~\ref{lem: tangent} lets us conclude that the slope 
		of the tangent to $L_0$ with $y$-intercept $h$ is smaller in magnitude 
		than that to $L_{1/2}$ (Lemma~\ref{lem: tangent} requires that $L_0$ 
		and $L_{1/2}$ are convex and decreasing; that they are decreasing is 
		immediate, while convexity follows from H\"older's inequality). Hence, 
		the variance (given by $-L_a'(\lambda)$ at $\lambda$ such that the 
		$y$-intercept of the tangent is $h$) at $a = 0$ is smaller than the 
		variance at $a = 1/2$, and the minimum variance is obtained by choosing 
		$a = 0$. 
		
		Equivalently, the largest minimum distance in \eqref{eqn: dh} is 
		achieved by choosing $a = 0$ and taking the atoms to be supported on 
		$\frac{1}{\sqrt{-L_0'(\lambda_h)}}$, where $\lambda_h$ is the value of 
		$\lambda$ that solves $L_0(\lambda) - \lambda L'_0(\lambda) = h$. 
		Consequently, the solution to \eqref{eqn: dh}, i.e., the distribution 
		with the largest minimum distance among those with entropy equal to $h$ 
		and second moment equal to $1$, has $p_i$ given by \eqref{eqn: pi} with 
		$a = 0$ and $x_i = \frac{i}{\sqrt{-L_0'(\lambda_h)}}$. This completes 
		the proof, by setting $L \coloneqq L_0$.
	\end{proof}

	\begin{lemma}[Tangents with same $y$-intercept to convex functions] 
		\label{lem: tangent}
		Let $f,g : (0,\infty) \to \bbR$ be two convex, decreasing functions 
		such that $f(x) > g(x)$ for all $x > 0$. Let $\ell_f$ and $\ell_g$ be 
		the tangents to $f$ and $g$ respectively that intersect the $y$-axis at 
		the same point. Then, the slope of $\ell_f$ is strictly smaller in 
		magnitude (less negative) than that of $\ell_g$. 
	\end{lemma}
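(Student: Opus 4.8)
The plan is to parametrize each tangent by its slope and the common $y$-intercept, and then obtain the slope inequality by a one-line contradiction. Write the two tangents as $\ell_f(t) = c + m_f t$ and $\ell_g(t) = c + m_g t$, where $c$ is the shared $y$-intercept and $m_f, m_g$ are the respective slopes. Since $f$ and $g$ are decreasing, both slopes are negative, so ``smaller in magnitude / less negative'' is exactly the assertion $m_f > m_g$. Let $x_f \in (0,\infty)$ denote a point at which $\ell_f$ touches $f$, so that $\ell_f(x_f) = f(x_f)$; such a point exists in the open domain by the assumption that $\ell_f$ is a tangent to $f$.

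The key structural input is the supporting-line property of convex functions. Because $g$ is convex, its tangent $\ell_g$ lies weakly below $g$ everywhere, i.e.\ $\ell_g(t) \le g(t)$ for all $t > 0$. Combining this with the strict hypothesis $g(t) < f(t)$ gives $\ell_g(t) < f(t)$ for every $t > 0$; that is, the line $\ell_g$ passes through $(0,c)$ yet stays strictly below the graph of $f$ on the entire domain. I will only use that $\ell_g$ supports $g$ and that $\ell_f$ meets $f$ at an interior point, so differentiability is not truly essential (one-sided derivatives and supporting lines would suffice), but I will assume differentiability, consistent with the use of $L_a'$ elsewhere in the paper.

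To finish, suppose toward a contradiction that the desired inequality fails, i.e.\ $m_f \le m_g$. Since the two lines share the intercept $c$ and $t > 0$, multiplying the slope inequality by $t$ gives $\ell_f(t) \le \ell_g(t)$ for all $t > 0$. Evaluating at the interior touching point $x_f > 0$ and invoking the previous paragraph yields $f(x_f) = \ell_f(x_f) \le \ell_g(x_f) < f(x_f)$, a contradiction. Hence $m_f > m_g$, which is precisely the claim that the slope of $\ell_f$ is strictly less negative than that of $\ell_g$.

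I expect no genuine obstacle here; the only points needing care are bookkeeping rather than substance. One must ensure the touching point $x_f$ is strictly positive, which is guaranteed because the domain is $(0,\infty)$, and one must track where strictness enters: it comes solely from the strict hypothesis $f > g$, which upgrades the conclusion from $m_f \ge m_g$ to the strict $m_f > m_g$ and simultaneously rules out the degenerate coincidence $\ell_f = \ell_g$. No estimate or limiting argument is required, so the main temptation to resist is overcomplicating the proof, e.g.\ by analysing the intercept map $x \mapsto \phi(x) - x\phi'(x)$ (whose derivative $-x\phi''(x) \le 0$ encodes the same geometry), which reaches the conclusion less directly.
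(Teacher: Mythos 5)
Your proof is correct and is essentially the paper's own argument: assume the slope inequality fails, use that the shared intercept forces $\ell_f \le \ell_g$ on $(0,\infty)$, and derive a contradiction at the touching point $x_f$ via the supporting-line property $\ell_g \le g < f$. If anything, your negation to $m_f \le m_g$ is slightly cleaner than the paper's, which formally only negates to the strict case and leaves the equality case implicit.
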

	\begin{proof}
		Suppose that the slope of $\ell_f$ is strictly larger in magnitude, 
		i.e., more negative, than that of $\ell_g$. Since $\ell_f$ and $\ell_g$ 
		intersect each other on the $y$-axis, we have $\ell_f(x) < \ell_g (x)$ 
		for all $x > 0$. There exists a point $x_f > 0$ where the tangent 
		$\ell_f$ touches $f$, i.e., $\ell_f(x_f) = f(x_f)$.
		Further, since $g$ is convex, it always lies above its tangent 
		$\ell_g$. Hence, we have the chain of inequalities 
		$$g(x_f) \geq \ell_g(x_f) > \ell_f(x_f) = f(x_f),$$
		which contradicts the hypothesis that $f(x) > g(x)$ for all $x > 0$. 
		Thus, the slope of $\ell_f$ must be strictly smaller in magnitude than 
		that of $\ell_g$. 
	\end{proof}

	\subsection{Low- and high-entropy regimes} \label{sec: ent_asymp}
	
	From Section~\ref{sec: dmin}, we have a parametric expression for $\dist_h$ given by $1/\sqrt{-L'(\lambda)}$ at $h = L(\lambda) - \lambda L'(\lambda)$. Fig.~\ref{fig: Llambda} shows the tradeoff between $\dist_h$ and $h$ as $\lambda$ varies from $0$ to $\infty$; in particular, we have $h \to 0$ and $\dist_h \to \infty$ as $\lambda \to \infty$, and $h \to \infty$ and $\dist_h \to 0$ as $\lambda \to 0$.
	
	Consider the function $L(\lambda)$. It is known that $L(\lambda) \approx \frac12\log\frac{\pi}{\lambda}$ for $\lambda \ll \pi$ and $L(\lambda) \approx \log(1 + 2 e^{-\lambda})$ for $\lambda \gg \pi$, and that these approximations are surprisingly tight as long as $\lambda$ is not too close to $\pi$ \cite{canonne2020discrete}. 
	Hence, $\dist_h$ behaves differently depending on whether $h$ is smaller than or larger than $L(\pi) - \pi L'(\pi) \approx 0.48$ bits, as characterized in the lemma below and shown in Fig.~\ref{fig: dh}.

	\begin{figure}[t]
		\centering
		\begin{subfigure}[t]{0.5\textwidth}
			\centering
			\includegraphics[width=\linewidth]{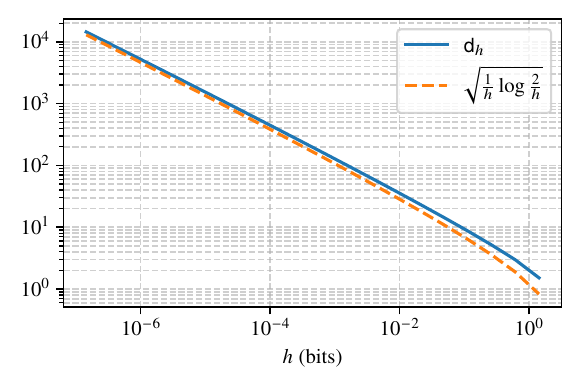}
			\caption{$\dist_h$ versus $h$ for small $h$. Both axes in log-scale.}
		\end{subfigure}%
		~ 
		\begin{subfigure}[t]{0.5\textwidth}
			\centering
			\includegraphics[width=\linewidth]{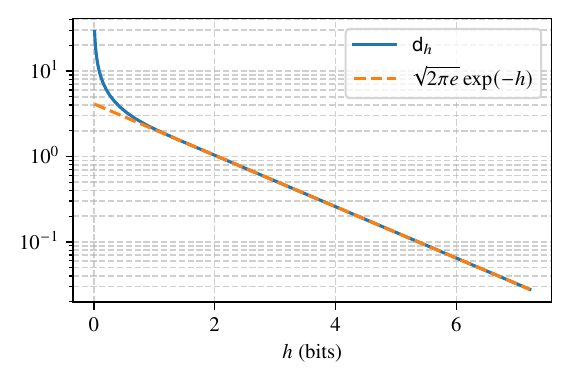}
			\caption{$\dist_h$ versus $h$ for large $h$. $y$-axis in log-scale.}
		\end{subfigure}
		\caption{Comparison of $\dist_h$ and its approximations in the small and large entropy regimes: (a) $\dist_h \approx \sqrt{\frac1h\log\frac{2}h}$ for small $h \ll 0.48$ bits, and (b) $\dist_h \approx \sqrt{2\pi e} \exp(-h)$ for large $h \gg 0.48$ bits.}
		\label{fig: dh} \vspace*{-8pt}
	\end{figure}
	
	\begin{lemma}[Approximation for $\dist_h$ at low and high entropy]
		We have the following approximations for $\dist_h$ at small and large $h$:
		\begin{equation*}
			\dist_h \approx \begin{cases}
				\sqrt{\frac1h\log\frac{2}h} & h \ll L(\pi)- \pi L'(\pi) \approx 0.48 \text{ bits},\\
				\sqrt{2\pi e}\,\exp(-h) & h \gg L(\pi) - \pi L'(\pi). 
			\end{cases}
		\end{equation*} \label{lem: dist_h}
	\end{lemma}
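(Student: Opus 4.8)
The plan is to work entirely from the parametric characterization established in Proposition~\ref{prop: dh}: $\dist_h^2 = 1/(-L'(\lambda_h))$, where $\lambda_h$ is the unique solution of $L(\lambda) - \lambda L'(\lambda) = h$. Since $h \to 0$ as $\lambda \to \infty$ and $h \to \infty$ as $\lambda \to 0$, the two regimes $h \ll 0.48$ bits and $h \gg 0.48$ bits correspond precisely to the regimes $\lambda \gg \pi$ and $\lambda \ll \pi$ in which the cited approximations for $L$ are valid. In each case I would substitute the relevant approximation for $L$, differentiate to obtain $-L'$, solve the resulting approximate equation $L(\lambda) - \lambda L'(\lambda) = h$ for $\lambda_h$, and back-substitute into $\dist_h^2 = 1/(-L'(\lambda_h))$.

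The large-entropy case is a clean closed-form computation. Using $L(\lambda) \approx \tfrac12 \log\tfrac{\pi}{\lambda}$ gives $L'(\lambda) \approx -\tfrac1{2\lambda}$, so the entropy equation becomes $h = L(\lambda) - \lambda L'(\lambda) \approx \tfrac12\log\tfrac\pi\lambda + \tfrac12 = \tfrac12\log\tfrac{\pi e}{\lambda}$, which inverts explicitly to $\lambda_h \approx \pi e^{1-2h}$. Substituting into $\dist_h^2 = 1/(-L'(\lambda_h)) \approx 2\lambda_h$ yields $\dist_h^2 \approx 2\pi e\, e^{-2h}$, i.e.\ $\dist_h \approx \sqrt{2\pi e}\,\exp(-h)$. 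As an independent cross-check (and to justify the ``$\calNdisc{\dist_h}(\lambda_h) \to \calN(0,1)$'' claim of Corollary~\ref{cor: ent_asymp}), I would note that a fine quantization of the unit-variance Gaussian at spacing $\dist_h$ has entropy $\approx \tfrac12\log(2\pi e) - \log\dist_h$; setting this equal to $h$ recovers the same $\dist_h$.

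The small-entropy case is where the work lies. Using $L(\lambda) \approx \log(1 + 2e^{-\lambda}) \approx 2e^{-\lambda}$ for large $\lambda$ gives $-L'(\lambda) \approx 2e^{-\lambda}$, hence $\dist_h^2 \approx \tfrac12 e^{\lambda}$, while the entropy equation becomes $h \approx 2(1+\lambda)e^{-\lambda} \approx 2\lambda e^{-\lambda}$. Combining these, $\dist_h^2 \approx \lambda_h/h$, so it remains only to express $\lambda_h$ in terms of $h$. The obstacle is that $h \approx 2\lambda e^{-\lambda}$ is transcendental (a Lambert-$W$-type equation with no elementary inverse). I would resolve it asymptotically by taking logarithms: $\lambda = \log\tfrac2h + \log\lambda$, and since $\log\lambda \approx \log\log\tfrac2h$ is of lower order than $\log\tfrac2h$ as $h \to 0$, the leading-order solution is $\lambda_h \approx \log\tfrac2h$. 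Back-substituting gives $\dist_h^2 \approx \tfrac1h\log\tfrac2h$, i.e.\ $\dist_h \approx \sqrt{\tfrac1h\log\tfrac2h}$.

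The main difficulty, and the only genuinely delicate point, is this transcendental inversion in the small-$h$ regime: I must argue that discarding the $\log\log\tfrac2h$ correction to $\lambda_h$ affects $\dist_h^2$ only at lower order, and I must keep track of the factor $2$ (arising from the $2e^{-\lambda}$ coefficient, equivalently from the two non-central atoms of the limiting three-point distribution) so that the logarithm reads $\log\tfrac2h$ rather than $\log\tfrac1h$. Since the lemma asserts only an asymptotic equivalence (``$\approx$''), it suffices to exhibit the leading-order term and verify that the neglected contributions are $\o(1)$ relative to it in each limit, rather than to produce two-sided bounds.
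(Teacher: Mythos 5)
Your proposal is correct and follows essentially the same route as the paper: substitute the regime-appropriate approximation for $L(\lambda)$, solve $L(\lambda)-\lambda L'(\lambda)=h$ for $\lambda_h$, and back-substitute into $\dist_h = 1/\sqrt{-L'(\lambda_h)}$, with the large-entropy case in closed form and the small-entropy case requiring the asymptotic inversion of $h \approx 2\lambda e^{-\lambda}$ (which the paper phrases via the Lambert-$W_{-1}$ asymptotic $\lambda_h \approx \log\tfrac2h + \log\log\tfrac2h$, while you first eliminate $e^{\lambda_h}$ via $\dist_h^2 \approx \lambda_h/h$ so that only the leading term $\lambda_h \approx \log\tfrac2h$ is needed --- a harmless and slightly tidier piece of bookkeeping).
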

	
	\begin{proof}[Proof]
		We have the following pairs of upper and lower bounds on $L(\lambda)$ that are true for all $\lambda > 0$ but are tight for small and large $\lambda$ respectively \cite{stephens2017gaussian,canonne2020discrete}.  
		For small $\lambda \ll \pi$, we use
		\begin{align*}
			\sqrt{\tfrac{\pi}{\lambda}}\left(1 + 2e^{-\frac{\pi^2}{\lambda}}\right)   \leq  e^{L(\lambda)} \leq \sqrt{\tfrac{\pi}{\lambda}}\left(1 + 2e^{-\frac{\pi^2}{\lambda}}\right) + e^{-\frac{\pi^2}{\lambda}},
		\end{align*}
		giving us the approximation $L(\lambda) \approx \log\sqrt{\frac{\pi}{\lambda}}$,
		and for large $\lambda \gg \pi$, we use
		\begin{align*}
			1 + 2e^{-\lambda} \enspace \leq \enspace e^{L(\lambda)} \enspace \leq \enspace  1 + 2e^{-\lambda} + \sqrt{\tfrac{\pi}{\lambda}}e^{-\lambda},
		\end{align*}
		which gives us $L(\lambda) \approx \log\big(1 + 2e^{-\lambda}\big) \approx 2 e^{-\lambda}$.
		We now prove the claim by using these to approximate $\lambda_h$ and write $\dist_h = 1/\sqrt{-L'(\lambda_h)}$ in terms of $h = L(\lambda_h) - \lambda_h L'(\lambda_h)$ in each regime. The arguments can be made rigorous by replacing the approximations with the appropriate lower and upper bounds at each stage.
		
		For small $\lambda$, we have $L(\lambda) \approx \log\sqrt{\frac{\pi}{\lambda}}$. Hence, $L'(\lambda) \approx -1/(2\lambda)$, which implies that $\dist_h \approx \sqrt{2\lambda_h}$. The entropy is $h = L(\lambda)-\lambda L'(\lambda) \approx \log\sqrt{\frac{\pi}{\lambda}} + \frac12 = \frac12\log\frac{\pi e}{\lambda}$, or equivalently, $\lambda_h \approx \pi e \exp(-2 h)$. Putting the two together, we have $\dist_h \approx \sqrt{2\pi e} \exp(-h)$.
		
		On the other hand, for large $\lambda$, we have $L(\lambda) \approx \log\big(1 + 2e^{-\lambda}\big) \approx 2 e^{-\lambda}$.  Hence, $L'(\lambda) \approx -2 e^{-\lambda}$. The entropy is $h = L(\lambda) - \lambda L'(\lambda) \approx 2 e^{-\lambda} ( 1+\lambda) \approx 2 \lambda e^{-\lambda}$. This is related to the secondary branch of the Lambert-$W$ function $W_{-1}$ \cite{corless1996lambert}, where $W_{-1}(y)$, for small negative $y$, is the large negative $x$ that solves $y = x e^x$. $W_{-1}(x)$ can be approximated as $\log(-x)-\log(-\log(-x))$. Here, we have $-\lambda_h \approx W_{-1}(-\frac{h}{2})$, and hence, $-\lambda_h \approx \log\frac{h}{2} - \log \log \frac{2}{h}$, or equivalently, $\exp(\lambda_h) \approx \frac{2}h\log\frac{2}h$. Hence, $\dist_h = \frac{1}{\sqrt{-L'(\lambda_h)}} \approx \sqrt{\frac{\exp(\lambda_h)}2} = \sqrt{\frac1h\log\frac{2}h}$. 
	\end{proof}
	
	With the above characterization of $\dist_h$, we can now complete the proof of Corollary~\ref{cor: ent_asymp}.
	\begin{proof}[Proof of Corollary~\ref{cor: ent_asymp}]
		We get \eqref{eqn: gap_char} by directly plugging into \eqref{eqn: C_H} the approximations from Lemma~\ref{lem: dist_h}.  
		
		To obtain the expression for $h \ll 0.48$ bits, we used the approximation $L(\lambda) \approx \log(1 + 2e^{-\lambda})$ for large $\lambda$, which is exactly what we obtain if we restrict the summation in $\log\big(\sum_{i \in \bbZ} \exp(- \lambda i^2)\big)$ to include only $i = 0, \pm1$.  This gives the symmetric three-point distribution with a large mass at the origin.
		
		Finally, to check that $\calNdisc{\dist_h}(\lambda_h)$ converges in distribution to $\calN(0,1)$ as $h \to \infty$, note that $\dist_h \approx \sqrt{2 \lambda_h}$.  Hence, it is enough to check that $\calNdisc{a}(a^2/2)$ converges to $\calN(0,1)$ as $a$ goes to 0, which follows from \cite[Proposition~27]{canonne2020discrete}.
	\end{proof}

	\section{Discussion and conclusion}  \label{sec: conc}
	We studied the problem of computing the capacity of a power-constrained Gaussian channel with an additional entropy constraint in the asymptotic high-SNR  regime.  We observed that the asymptotic optimal distribution is the one that has the largest minimum distance between consecutive atoms, showed that the discrete Gaussian uniquely achieves this largest minimum distance, and characterized the exponential rate at which the entropy-constrained capacity approaches $h$ as $\snr \to \infty$.
	
	One claim that remains unresolved is the convergence of the sequence $(X^*_\snr)_\snr$ of true capacity-achieving distributions at finite SNRs to the discrete Gaussian as $\snr \to \infty$.  The main difficulty is that Lemma~\ref{lem: cond_ent} no longer holds in general if $X$ is allowed to change with $\snr$, and indeed, it is possible to find counterexamples by allowing the probability masses at the atoms achieving $\dmin$ to decay exponentially fast in $\snr$.  Nonetheless, we believe that $X^*_\snr$ should still satisfy Lemma~\ref{lem: cond_ent}, i.e., $\lim_{\snr \to \infty} -\frac1\snr \log H(X^*_{\snr} \mid X^*_{\snr} + Z) = \lim_{\snr \to \infty} \dmin(X^*_\snr)^2/8$, where $Z \sim \calN(0,1/\snr)$.  
	This is exactly the setup for which the tools of $\Gamma$-convergence \cite{braides2002gamma} have been developed, but it remains unexplored by the information theory community as yet.  
	
	Since such a result clearly relies on $X^*_\snr$ being the optimal distribution at finite SNR,  the necessary KKT conditions satisfied by the optimal $X$ with atoms $x_i$ of probability $p_i$ might be of use.  They are given by the following self-consistent equations:
	\begin{equation*}
		p_i \propto \exp\Big(\frac{U(x_i) + \lambda {x_i}^2}{\mu}\Big) \quad \text{and} \quad U'(x_i) = -2\lambda x_i
	\end{equation*}
	for all $i$, where $U(x) = \bbE[\log f_Y(x + Z)]$ and $f_Y$ is the density of $Y = X + Z$, and $\lambda,\mu > 0$ are parameters to be chosen to make $\bbE[X^2] = 1$ and $H(X) = h$.
	
	In the absence of an entropy constraint, we would only have had the second condition, and it can be checked that it is satisfied at all $x \in \bbR$ for $X \sim \calN(0,1)$.  With the entropy constraint, however, the parameter $\mu$ acts as a ``temperature'' controlling the spread of the distribution.  Another interpretation is the following: the quantity $U(x) + \lambda x^2$ represents an ``energy'', the atoms are the stationary points of this energy, and the probabilities are given by a softmax of the energies (but $U$ also depends on the probability distribution).  It would be interesting to see if such a line of thinking can yield bounds on the cardinality of the optimal distribution at finite SNR.

	\printbibliography
	
\end{document}